\newtheorem{theorem}{Theorem}
\newtheorem{corollary}{Corollary}
\newtheorem{remark}{Remark}
\newtheorem{example}{Example}
\newtheorem{definition}{Definition}
\newtheorem{lemma}{Lemma}
\newtheorem{proposition}{Proposition}
\newenvironment{proof}[1][Proof]{\noindent\textbf{#1.} }{\ \rule{0.5em}{0.5em}}
\def\mymathhyphen{{\hbox{-}}}
\newcommand{\by}{\mathbf{y}}
\newcommand{\bY}{\mathbf{Y}}
\newcommand{\bx}{\mathbf{x}}
\newcommand{\bX}{\mathbf{X}}
\newcommand{\bz}{\mathbf{z}}
\newcommand{\bZ}{\mathbf{Z}}
\newcommand{\bH}{\mathbf{H}}
\newcommand{\bA}{\mathbf{A}}
\newcommand{\ba}{\mathbf{a}}
\newcommand{\bd}{\mathbf{d}}
\newcommand{\bD}{\mathbf{D}}
\newcommand{\bB}{\mathbf{B}}
\newcommand{\bb}{\mathbf{b}}
\newcommand{\bU}{\mathbf{U}}
\newcommand{\bu}{\mathbf{u}}
\newcommand{\bV}{\mathbf{V}}
\newcommand{\bv}{\mathbf{v}}
\newcommand{\CV}{\mathcal{V}}
\newcommand{\bt}{\mathbf{t}}
\newcommand{\bT}{\mathbf{T}}
\newcommand{\bI}{\mathbf{I}}
\newcommand{\bP}{\mathbf{P}}
\newcommand{\bL}{\mathbf{L}}
\newcommand{\bG}{\mathbf{G}}
\newcommand{\bQ}{\mathbf{Q}}
\newcommand{\bs}{\mathbf{s}}
\newcommand{\bF}{\mathbf{F}}
\newcommand{\ZZ}{\mathbb{Z}}
\newcommand{\CC}{\mathbb{C}}
\newcommand{\RR}{\mathbb{R}}
\newcommand{\Tsnr}{\mathsf{SNR}}
\newcommand{\Mod}{\bmod\Lambda}
\newcommand{\trace}{\mathop{\mathrm{trace}}}
\newcommand{\diag}{\mathop{\mathrm{diag}}}
\newcommand{\Span}{\mathop{\mathrm{span}}}
\newcommand{\vect}{\mathop{\mathrm{vec}}}
\newcommand{\Vol}{\mathrm{Vol}}
\newcommand{\Cwi}{C_{\text{WI}}}
\newcommand{\Cst}{\mathcal{C}^{\text{ST}}}
\newcommand{\MH}{\mathcal{H}}
\newcommand{\bZe}{\mathbf{Z}_{\text{eff}}}
\newcommand{\Real}{\mathsf{Re}}
\newcommand{\Imag}{\mathsf{Im}}
\DeclareMathOperator*{\argmin}{\arg\!\min}
\begin{document}

\title{Precoded Integer-Forcing Universally Achieves the MIMO Capacity to Within a Constant Gap}
\author{Or~Ordentlich and
        Uri~Erez,~\IEEEmembership{Member,~IEEE}
\thanks{The work of O. Ordentlich was supported by the Adams Fellowship Program of the Israel Academy of Sciences and Humanities, a fellowship from The Yitzhak and Chaya Weinstein Research Institute for Signal Processing at Tel Aviv University and the Feder Family Award. The work of U. Erez was supported in part by the Israel Science Foundation under Grant No. 1557/13.}
\thanks{O. Ordentlich and U. Erez are with Tel Aviv University, Tel Aviv, Israel (email: ordent,uri@eng.tau.ac.il).
}}


\maketitle

\begin{abstract}
An open-loop single-user multiple-input multiple-output communication scheme is considered where a transmitter, equipped with multiple antennas, encodes the data into independent streams all taken from the same linear code. The coded streams are then linearly precoded using the encoding matrix of a perfect linear dispersion space-time code.
At the receiver side, integer-forcing equalization is applied, followed by standard single-stream decoding.
It is shown that this communication architecture achieves the capacity of any Gaussian multiple-input multiple-output channel up to a gap that depends only on the number of transmit antennas.
\end{abstract}

\section{Introduction}
The Gaussian Multiple-Input Multiple-Output (MIMO) channel has been the focus of extensive research efforts since the pioneering works of Foschini \cite{foschini96}, Foschini and Gans \cite{fmg98}, and Telatar \cite{telatar99}.
Mathematically, the single-user complex MIMO channel with $M$ transmit and $N$ receive antennas is modeled as
\begin{align}
\by=\bH\bx+\bz\label{channelmodel}
\end{align}
where $\bH\in\CC^{N\times M}$ is the channel matrix, $\by\in\CC^{N\times 1}$ is the channel output, $\bx\in\CC^{M\times 1}$ is the input vector that is subject to the power constraint\footnote{In this paper $(\bx)^{\dagger}$ is the conjugate transpose of $\bx$.}
\begin{align}
\mathbb{E}(\bx^{\dagger}\bx)\leq M\cdot\Tsnr,\nonumber
\end{align}
and $\bz$ is an additive noise vector of i.i.d. circularly symmetric complex Gaussian entries with zero mean and unit variance.

The mutual information of this channel is maximized by a circularly symmetric complex Gaussian input~\cite{telatar99} with covariance matrix $\bQ$ satisfying $\trace(\bQ)\leq M\cdot \Tsnr$, and is given by\footnote{All logarithms in this paper are to base $2$, and rates are measured in bits per channel use.}
\begin{align}
C=\max_{\bQ\succ 0 \ : \ \trace{\bQ}\leq M\cdot\Tsnr}\log\det\left(\bI+\bQ\bH^{\dagger}\bH\right).\label{Capacity}
\end{align}
The choice of $\bQ$ that maximizes~\eqref{Capacity} is determined by the water-filling solution.
When the matrix $\bH$ is known at both transmission ends, i.e., in a closed-loop scenario, this mutual information is the capacity of the channel and may closely be approached using the singular-value decomposition in conjunction with standard scalar codes designed for
an additive white Gaussian noise (AWGN) channel.
In certain scenarios, the natural choice \mbox{$\bQ=\Tsnr\cdot\bI$} is used, resulting in the \emph{white-input (WI) mutual information}
\begin{align}
\log\det\left(\bI+\Tsnr\bH^{\dagger}\bH\right).\nonumber
\end{align}
We may define the set
\begin{align}
\mathbb{H}(\Cwi,\Tsnr)=\bigg\{&\bH\in\CC^{N\times M} \ : \nonumber\\
&\ \log\det\left(\bI+\Tsnr\bH^{\dagger}\bH\right)=\Cwi \bigg\},\label{compoundfamily}
\end{align}
of all channel matrices with the same white-input mutual information $\Cwi$. The corresponding \emph{compound channel} model is defined by~\eqref{channelmodel} with the channel matrix $\bH$ \emph{arbitrarily} chosen from the set $\mathbb{H}(\Cwi,\Tsnr)$, and fixed throughout the whole transmission period. The matrix $\bH$ that was chosen by the channel is revealed to the receiver, but not to the transmitter. Clearly, the capacity of this compound channel is $\Cwi$, and is achieved with a white Gaussian input. This paper is concerned with approaching the compound capacity using a low-complexity scheme.

The compound MIMO channel model appears in several important communication scenarios. Wireless systems often operate in \emph{open-loop} mode, where the receiver knows the channel matrix $\bH$ but the transmitter only knows the corresponding white-input mutual information. This scenario is well captured by the compound model, and will be the focus of this paper. One may be even more conservative in the assumptions on the channel state information available at the transmitter (CSIT), and assume that even $\Cwi$ is unknown. In this case, a reasonable approach is to transmit codewords from an i.i.d. white Gaussian codebook with target rate $R$, such that the receiver will be able to correctly decode the transmitted message if $R<\log\det\left(\bI+\Tsnr\bH^{\dagger}\bH\right)$. It follows that from the transmitter's perspective, the coding task for this scenario is identical to that of coding for a compound channel with $\Cwi=R$. It may be argued that if the channel matrix $\bH$ remains constant for a long period, the receiver can communicate (a quantized version of) it to the transmitter with a negligible overhead, which reduces the communication problem to the simpler closed-loop scenario. Sometimes, however, the transmitter wishes to broadcast the same message to many receivers, such that all receivers with a ``good-enough'' link should be able to decode the information. This communication model approaches the compound channel model~\eqref{compoundfamily} as the number of potential receivers grows.


While the theoretical performance limits of open-loop communication over a Gaussian MIMO channel are well understood, unlike for closed-loop transmission, much is still lacking when it comes to practical schemes that are able to approach these limits. In general, the notion of practicality is rather vague and can be understood in different ways. In this paper, we use it in the following sense: a scheme is deemed practical if it \emph{decouples} the signal-processing task of channel equalization from the coding task. In other words, a practical scheme applies simple signal processing operations to transform the MIMO channel to a set of scalar channels, over which standard ``off-the-shelf'' codes for an AWGN channel may be used. This notion of practicality is motivated by the fact that in the past decades, coding for AWGN channels has reached an advanced state, and low-complexity coding schemes (e.g., turbo and LDPC codes) operating near capacity are known. It is thus desirable to combine AWGN coding and decoding techniques with equalization in a modular way, with the aim of approaching the capacity of the MIMO channel.  For the closed-loop scenario, this can be achieved using the singular-value decomposition. However, for the compound MIMO channel, practical capacity-approaching schemes are not known in general.

Such a modular scheme is known for the $1 \times 2$ MISO channel where Alamouti modulation offers an optimal solution. More generally,
modulation via orthogonal space-time block ``codes" allows one to approach the WI mutual information using scalar AWGN coding and decoding
in the limit of small rate~\cite{tseviswanath}.

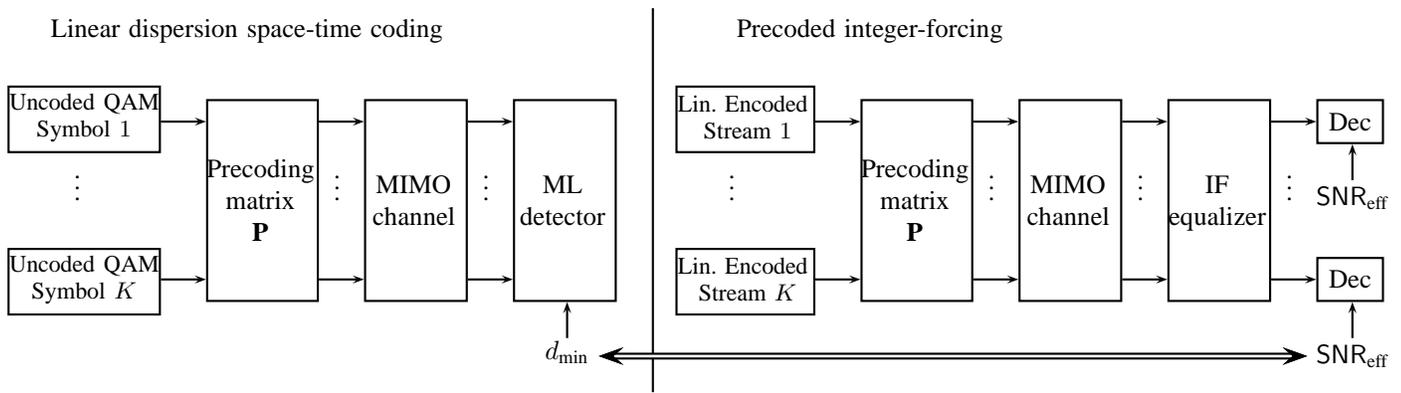
\begin{figure*}[]
\begin{center}
\psset{unit=0.6mm}
\begin{pspicture*}(-11,-20)(300,65)

\rput(0,0)
{
\rput(45,60){Linear dispersion space-time coding}

\rput(3,3){\psframe(-11,30)(23,45)\rput(6,41){\small{Uncoded QAM}}\rput(6,35){\small{Symbol $1$}}
}
\rput(7.5,27){$\vdots$}
\rput(3,-33){\psframe(-11,30)(23,45)\rput(6,41){\small{Uncoded QAM}}\rput(6,35){\small{Symbol $K$}}
}
\rput(26,0){
\rput(0,40){\psline{->}(0,0)(10,0)}
\rput(0,5){\psline{->}(0,0)(10,0)}
}

\rput(36,0){\psframe(0,0)(25,45)\rput(12,22){$\begin{array}{c}
                                             $\text{Precoding}$ \\
                                             $\text{matrix}$ \\
                                             $\textbf{P}$
                                           \end{array}$}
}

\rput(61,0){
\rput(0,40){\psline{->}(0,0)(10,0)}
\rput(4,27){$\vdots$}
\rput(0,5){\psline{->}(0,0)(10,0)}
}

\rput(71,0){\psframe(0,0)(23,45)\rput(11,22){$\begin{array}{c}
                                             $\text{MIMO}$ \\
                                             $\text{channel}$
                                           \end{array}$}
}

\rput(94,0){
\rput(0,40){\psline{->}(0,0)(10,0)}
\rput(4,27){$\vdots$}
\rput(0,5){\psline{->}(0,0)(10,0)}
}

\rput(104,0){\psframe(0,0)(23,45)\rput(11,22){$\begin{array}{c}
                                             $\text{ML}$ \\
                                             $\text{detector}$
                                               \end{array}$}
\rput(12,-11){$d_{\text{min}}$}
\rput(12,0){\psline{->}(0,-8)(0,0)}
}

\rput(135,0){\psline(0,-20)(0,80)}

}

\rput(145,0)
{
\rput(38,60){Precoded integer-forcing}

\rput(3,3){\psframe(-8,30)(23,45)\rput(7,41){\small{Lin. Encoded}}\rput(8,35){\small{Stream $1$}}
}
\rput(7.5,27){$\vdots$}
\rput(3,-33){\psframe(-8,30)(23,45)\rput(7,41){\small{Lin. Encoded}}\rput(8,35){\small{Stream $K$}}
}

\rput(26,0){
\rput(0,40){\psline{->}(0,0)(10,0)}
\rput(0,5){\psline{->}(0,0)(10,0)}
}

\rput(36,0){\psframe(0,0)(25,45)\rput(12,22){$\begin{array}{c}
                                             $\text{Precoding}$ \\
                                             $\text{matrix}$ \\
                                             $\textbf{P}$
                                           \end{array}$}
}

\rput(61,0){
\rput(0,40){\psline{->}(0,0)(10,0)}
\rput(4,27){$\vdots$}
\rput(0,5){\psline{->}(0,0)(10,0)}
}

\rput(71,0){\psframe(0,0)(23,45)\rput(11,22){$\begin{array}{c}
                                             $\text{MIMO}$ \\
                                             $\text{channel}$
                                           \end{array}$}
}

\rput(94,0){
\rput(0,40){\psline{->}(0,0)(10,0)}
\rput(4,27){$\vdots$}
\rput(0,5){\psline{->}(0,0)(10,0)}
}

\rput(104,0){\psframe(0,0)(23,45)\rput(11,22){$\begin{array}{c}
                                             $\text{IF}$ \\
                                             $\text{equalizer}$
                                               \end{array}$}

}

\rput(127,0){
\rput(0,40){\psline{->}(0,0)(10,0)}
\rput(4,27){$\vdots$}
\rput(0,5){\psline{->}(0,0)(10,0)}
}

\rput(137,0){
\rput(0,40){\psframe(0,-5)(15,5)\rput(7.5,0){Dec}
\rput(8,-17){$\Tsnr_{\text{eff}}$}
\rput(8,0){\psline{->}(0,-13)(0,-5)}
}

\rput(0,5){\psframe(0,-5)(15,5)\rput(7.5,0){Dec}
\rput(8,-17){$\Tsnr_{\text{eff}}$}
\rput(8,0){\psline{->}(0,-13)(0,-5)}
}
}

}

\psline[doubleline=true]{<->}(123,-12)(280,-12)

\end{pspicture*}
\end{center}
\caption{An illustrative comparison between linear dispersion space-time coding and precoded integer-forcing. Linear dispersion space-time coding consists of precoding uncoded QAM symbols, and detecting these symbols at the receiver. The detector's performance is dictated by $d_{\text{min}}$ which is the minimum distance at the received constellation. In precoded integer-forcing, coded streams are precoded and transmitted over the channel. The receiver first applies an integer-forcing equalizer and then decodes linear combinations of the streams. The performance is dictated by $\Tsnr_{\text{eff}}$. In this paper we show that $d_{\text{min}}$ and $\Tsnr_{\text{eff}}$ are closely related.}
\label{fig:STvsPIF}
\end{figure*} 

Beyond the low rate regime, the multiple degrees of freedom offered by the channel need to be utilized in order to approach capacity. For this reason, despite considerable work and progress, the problem of designing a practical scheme that approaches the capacity of the compound MIMO channel remains unsolved. As a consequence, less demanding benchmarks became widely accepted in the literature. First,
since statistical modeling of a wireless communication link is often available, one may be content with guaranteeing good performance
only for channel realizations that have a ``high" probability. Further, to simplify analysis and design, the asymptotic criterion of the diversity-multiplexing tradeoff (DMT) \cite{zt03} has broadly been adopted.

Unfortunately, statistical characterizations, and the DMT criterion in particular, offer only a coarse figure of merit for assessing schemes. Specifically, assuming an i.i.d. fading model with a continuous distribution on the channel coefficients precludes the possibility of having an entire row in the channel matrix nulled out. For example, if the channel is assumed to have $N=2$ receive antennas and $M=2$ transmit antennas with  i.i.d. Rayleigh fading, the class of matrices of the form
\begin{align}
\bH=\left[
      \begin{array}{cc}
        h_1 & h_2 \\
        0 & 0 \\
      \end{array}
    \right]
    \label{nulledRow}
\end{align}
where $h_1$ and $h_2$ satisfy $\log(1+\Tsnr(|h_1|^2+|h_2|^2))=\Cwi$, has zero probability. Thus, the DMT optimality of a scheme w.r.t. a $2\times 2$ i.i.d. Rayleigh fading distribution, tells us nothing about its performance over channels of the form~\eqref{nulledRow}.
The class of channels described by~\eqref{nulledRow} corresponds to receivers that are equipped with a single antenna, rather than two. It follows that, a scheme that is DMT optimal for a $2\times 2$ i.i.d. Rayleigh fading distribution, may exhibit terrible performance over channels with dimensions $1\times 2$. Thus, the DMT framework is inadequate for analyzing communication scenarios with degrees-of-freedom mismatch, i.e., when the transmitter does not know in advance the number of receive antennas, or alternatively, has to simultaneously transmit (the same message) to several users, equipped with a different number of receive antennas. The compound channel model, on the other hand, does not distinguish between channel matrices with the same WI mutual information, and is therefore more suitable for such scenarios.

In~\cite{tv06}, Tavildar and Vishwanath introduced the notion of \emph{approximately universal space-time codes} and derived a necessary and sufficient criterion for a code to be approximately universal. This criterion is closely related to the nonvanishing determinant criterion and is met by several known coding schemes \cite{ekpkh06,orbv06,esk07}. Roughly speaking, approximate-universality guarantees that a scheme is DMT optimal for any statistical channel model. The criterion derived in~\cite{tv06} ensures that the minimum distance at the receiver scales appropriately with $\Cwi$ regardless of the exact realization of $\bH$, which, in turn, guarantees DMT optimality. Thus, the problem of finding coding schemes that are DMT optimal regardless of the channel statistics is now solved.


Approximately universal schemes still suffer, however, from the asymptotic nature of the DMT criterion.  Essentially, the approximate universality of a scheme guarantees that if the white-input mutual information of the MIMO channel is $\Cwi$, the scheme's error probability at a certain rate $R$ scales roughly as\footnote{The $Q$-function is defined as $Q(x)\triangleq\frac{1}{2\pi}\int_x^{\infty}e^{-\frac{t^2}{2}}dt$.} $Q(\sqrt{2^{\Cwi-R}})$, for large $\Cwi$. This is the same error probability behavior as that of uncoded transmission over a single-input single-output (SISO) AWGN channel with capacity $\Cwi$. This may suffice when $\Cwi$ is large enough and moderate error probabilities are required, but does not provide performance guarantees for finite values of $\Cwi$. In particular, the approximate universality criterion was designed for coding schemes with short block lengths, and does not attempt to exploit the opportunity of reducing the error probability by increasing the block length when the channel remains constant for a long period of time.

While designing a practical communication scheme that approaches the compound MIMO capacity is still out of reach, in the present work we take a step in this direction. Namely, a practical communication architecture that achieves the compound MIMO capacity up to a \emph{constant gap}, that depends only on the number of transmit antennas, is studied. Such a traditional information-theoretic performance guarantee is  substantially stronger than approximate universality.
In the considered scheme, which we refer to as \emph{precoded integer-forcing}, the transmitter encodes the data into independent streams, as in the standard V-BLAST~\cite{tseviswanath} architecture. However, in contrast to standard V-BLAST where each one of the streams can be encoded by a different code, in the considered scheme it is crucial that all streams are encoded using the \emph{same} linear code. The coded streams are then linearly precoded using the generating matrix of a space-time code from the class of perfect codes~\cite{yw03,brv05,orbv06,esk07}, which are approximately universal. At the receiver side, integer-forcing (IF) equalization~\cite{zneg12IT} is applied.

An IF receiver~\cite{zneg12IT} attempts to decode a full-rank set of linear combinations of the transmitted streams with integer-valued coefficients. Once these equations are decoded, they can be solved for the transmitted streams. The receiver's front end consists of a linear equalization matrix that transforms the MIMO channel into a set of SISO sub-channels, each corresponding to a different linear combination, with an effective SNR that depends on the integer coefficients of this linear combination. The performance of the scheme is dictated by the worst effective SNR, over all sub-channels.



\subsection{Our Contribution}
\label{subsec:ourcont}
The integer-forcing receiver architecture was introduced in~\cite{zneg12IT} and has since received considerable attention in the literature (see e.g.,~\cite{hc12,hc13,ncnc13,shv13}). While numerical experiments revealed that in many cases its performance is quite close to that of the optimal maximum-likelihood decoder~\cite{zneg12IT,shv13,de12}, the analytic performance guarantees available in the literature prior to this work were quite weak. In particular, the strongest result was that for $M\leq N$ the IF receiver achieves the optimal DMT for Rayleigh fading MIMO channels when the transmit antennas are restricted to transmitting independent streams~\cite{zneg12IT}. The main contribution of the current work is in providing solid analytic performance guarantees for the integer-forcing receiver.

The key step in our analysis is Lemma~\ref{lem:snrdmin} which lower bounds the effective SNR seen by the integer-forcing receiver in terms of $d_{\text{min}}$ - the minimum distance seen at the receiver when all antennas transmit QAM symbols. When the number of transmit antennas $M$ is larger than the number of receive antennas $N$, the minimum distance typically decreases as the cardinality of the QAM constellation increases. Our result, takes this phenomena into account and is therefore useful for any number of transmit and receive antennas. We then apply Lemma~\ref{lem:snrdmin} together with a recent result from number theory that concerns the typical rate of decrease of $d_{\text{min}}$ with the cardinality of the transmitted constellation~\cite{hussain2009metrical} to prove Lemma~\ref{lem:IF_DoF} which establishes that the IF receiver achieves the optimal number of degrees-of-freedom (DoF) for almost all $\bH\in\RR^{N\times M}$, regardless of $N$ and $M$. While this result is not surprising for the case $N\geq M$, where standard zero-forcing or MMSE receivers suffice to achieve the maximal number of DoF, it is quite remarkable for channels with $M<N$, where standard linear receivers are practically useless in the high-SNR regime.

Although Lemma~\ref{lem:IF_DoF} provides strong motivation for using the IF receiver, it suffers from two shortcomings. First, it characterizes the performance of the IF receiver only in the asymptotic high-SNR regime. Second, it only holds for almost all $\bH\in\RR^{N\times M}$ w.r.t. Lebesgue measure on $\RR^{N\times M}$, but provides no guarantees for specific channel realizations. To circumvent these weaknesses, we employ space-time precoding at the transmitter, resulting in a \emph{precoded IF} scheme.

Precoded IF may be viewed as an extension of linear dispersion space-time ``codes''. In such ``codes'', uncoded QAM symbols are linearly modulated over space and time. This is done by linearly precoding the QAM symbols using a precoding matrix $\bP$. For precoded IF, the same precoding matrix $\bP$ is applied to \emph{codewords} taken from a linear code, rather than uncoded QAM symbols. See Figure~\ref{fig:STvsPIF}. The performance of linear dispersion space-time ``codes'' is dictated by $d_{\text{min}}$, the minimum distance in the received constellation, whereas the performance of precoded IF is determined by the effective signal-to-noise ratio $\Tsnr_{\text{eff}}$. By Lemma~\ref{lem:snrdmin}, minimum distance guarantees for precoded QAM symbols translate to guarantees on the effective SNR for precoded IF, when the same precoding matrix is used.

The design of precoding matrices for uncoded QAM, that guarantee an appropriate growth of $d_{\text{min}}$ as a function of $\Cwi$, has been extensively studied over the last decade. A remarkable family of such matrices are the generating matrices of \emph{perfect} linear dispersion space-time codes, which are approximately universal~\cite{orbv06,esk07}. We apply the tight connection between $d_{\text{min}}$ and $\Tsnr_{\text{eff}}$ to show that when such precoding matrices are used for precoded IF, $\Tsnr_{\text{eff}}$ also grows appropriately with $\Cwi$. Consequently, we are able to prove that precoded IF achieves rates within a constant gap from the compound MIMO capacity.



\subsection{Related Work}
\label{subsec:relwork}
Integer-forcing equalization essentially reduces to lattice-reduction (LR) in the case of uncoded transmission. Lattice-reduction aided receivers for perfect space-time modulated QAM constellations were considered in the literature, and were shown to be DMT optimal~\cite{je10}. The key difference is that while the latter approach involves uncoded transmission and symbol-by-symbol detection, the architecture proposed here uses linearly \emph{coded} streams and the detection phase is replaced with equalization and decoding. This in turn, leads to performance guarantees that are valid at any (fixed) transmission rate.

In~\cite{ecd04}, El-Gamal \emph{et al.} proposed a lattice space-time (LAST) coding scheme, and showed that it can achieve the compound MIMO capacity. Although the LAST coding scheme uses lattice encoding and decoding, its complexity is in general very high. The reason for this is that the lattice decoding performed by the receiver is w.r.t. a lattice induced by both the transmitted constellation and the channel matrix $\bH$. In other words, the LAST coding scheme does not decouple the equalization and decoding tasks. In particular, even if a lattice with low decoding complexity is transmitted, after passing through the channel its structure is changed and the decoding complexity of the obtained lattice may (and is most likely to) no longer be low. This is not the case for precoded IF. In the scheme considered here, the receiver decodes integer linear combinations of the transmitted streams. Since these streams are taken from the same linear code, their integer linear combinations are also members of the linear code. As a result, the task of decoding these linear combinations is identical to the task of decoding a single stream over a scalar AWGN channel. If the linear/lattice code that was used to encode the streams can be decoded with low complexity, so can the integer linear combinations. The channel matrix $\bH$ is handled in the equalization procedure, and has no effect on the decoding task, just as in standard linear receiver architectures.

Finding the exact capacity region of many network information theoretic problems may be very difficult. Nevertheless, a recent line of work has demonstrated that characterizing the capacity region to within a constant number of bits is often a manageable challenge (see e.g.,~\cite{etw08,adt11,od13,tse2009s} and references therein). The constant gap result presented here is of different spirit. The capacity of the compound MIMO channel considered here is known and may be achieved using random coding and maximum-likelihood decoding. Our results only show that the rate achieved by the sub-optimal scheme precoded IF, is a constant number of bits from the capacity. Nevertheless, the results derived in this paper may be useful in the future for obtaining approximate capacity characterizations for several network problems. More specifically, it is now recognized that lattice codes play a key role in characterizing the fundamental limits of certain communication networks, see e.g.~\cite{pzek11,bpt10,ng11IT,wnps10,ncl10,CoFTransformFull} and~\cite[Chapter 12]{ramibook}. A common feature of many of these lattice-based coding schemes is that, from the perspective of each receiver, they induce effective multiple-access (MAC) channels with a reduced number of users, all of which employ the same lattice codebook. The achievable rates for a MAC channel where all users use the same lattice codebook is difficult to analyze, but can be lower bounded by the rates attained via the IF receiver. In~\cite{CoFTransformFull} this technique was successfully applied for approximating the sum-capacity for the symmetric Gaussian $K$-user interference channel. Our bounds on the rate-loss incurred by the IF receiver w.r.t. the mutual information may lead to closed form inner bounds on the performance of lattice-based coding schemes for other networks.

\subsection{Paper Outline}
The rest of the paper is outlined as follows. Section~\ref{sec:IF} gives an overview of IF equalization and analyzes its performance without precoding under various assumptions, while Section~\ref{sec:IFST} considers the precoded IF scheme. In Section~\ref{sec:ST} several properties of perfect linear dispersion space-time codes are recalled and a lower bound on their worst-case minimum distance is derived. The proof that precoded IF achieves the compound MIMO capacity to within a constant gap is given in Section~\ref{sec:main}. As an example of the advantages of the proposed approach, low-complexity constructions of MIMO \emph{rateless} coding schemes, which are based on precoded IF, are derived in Section~\ref{sec:rateless}.
Concluding remarks appear in Section~\ref{sec:conclusions}.

\section{Performance of the Integer-Forcing scheme}
\label{sec:IF}

Integer-Forcing equalization is a low-complexity architecture for the MIMO channel, which was proposed by Zhan \textit{et al}.~\cite{zneg12IT}. The key idea underlying IF is to first decode integral linear combinations of the signals transmitted by all antennas, and then, after the noise is removed, invert those linear combinations to recover the individual transmitted signals. This is made possible by transmitting codewords from the \emph{same} linear\slash lattice code from all $M$ transmit antennas, leveraging the property that linear codes are closed under (modulo) linear combinations with integer-valued coefficients.

In this section we review and extend some of the results of~\cite{zneg12IT} and~\cite{de12} in a way that is suitable for our purposes.

\subsection{Nested Lattice Codes}
\label{subsec:nested}
Let $\Lambda_c\subset\Lambda_f$ be a pair of $n$-dimensional nested lattices (see~\cite{ez04,ramibook} for a more thorough treatment of lattice definitions and properties). The lattice $\Lambda_c$ is referred to as the coarse lattice and $\Lambda_f$ as the fine lattice. Denote by $\CV_c$ the fundamental Voronoi region of $\Lambda_c$, and define the second moment of $\Lambda_c$ as
\begin{align}
\sigma^2(\Lambda_c)\triangleq\frac{1}{n}\frac{1}{\Vol(\CV_c)}\int_{\bu\in\CV_c}\|\bu\|^2d\bu,\nonumber
\end{align}
where $\Vol(\CV_c)$ is the volume of $\CV_c$. A nested lattice codebook $\mathcal{C}=\Lambda_f\cap\CV_c$, with rate
\begin{align}
R=\frac{1}{n}\log\left|\Lambda_f\cap\CV_c\right|\frac{\text{bits}}{\text{channel use}}\nonumber
\end{align}
is associated with the nested lattice pair. The codebook is scaled such that $\sigma^2(\Lambda_c)=\Tsnr/2$.

\vspace{2mm}

\begin{example}
\label{ex:latticecodebooks}
We give three examples of common structures of nested lattice codebooks. See Figure~2 for an illustration. More examples can be found in~\cite{fsk11}.

\begin{itemize}
\item \underline{\emph{Uncoded transmission}} - The simplest nested lattice codebook is an uncoded one, where the fine lattice $\Lambda_f$ is the integer lattice $\ZZ$ whereas the coarse lattice is $\Lambda_c=q\ZZ$ for some integer $q>1$. The Voronoi region in this case is \mbox{$\CV_c=[-q/2,q/2)$} and the obtained nested lattice codebook $\mathcal{C}$ consists of all integers in the interval $[-q/2,q/2)$. The rate of this codebook is $R=\log{q} \ \text{bits}/\text{channel use}$.
\item \underline{\emph{$q$-ary linear code without shaping}} - A more sophisticated, yet reasonable to implement, nested lattice codebook can be obtained by lifting a $q$-ary linear code with block length $n$ to Euclidean space using Construction A~\cite{cs88,loeliger97}, and taking the resulting lattice as $\Lambda_f$. The coarse lattice is taken as $\Lambda_c=q\ZZ^n$, as in the uncoded case. The obtained nested lattice codebook $\mathcal{C}$ is therefore simply the $q$-ary linear code coupled with a PAM constellation.
\item \underline{\emph{``Good'' nested lattice pair of high dimension}} - A third option is to use a pair of  lattices of high dimension where the fine lattice is ``good'' for coding over an AWGN channel, whereas the coarse lattice is ``good'' for mean-squared-error quantization (see~\cite{ez04,ramibook} for precise definitions of ``goodness''). The obtained nested lattice codebook admits a relatively simple performance analysis, that yields closed-form rate expressions. However, implementing such a codebook is more complicated (although progress in this direction was made in~\cite{et05}).

    The performance improvement obtained by using such a codebook w.r.t. a $q$-ary linear code without shaping is bounded from above by $\nicefrac{1}{2}\log(2\pi e/12)$ bits per real dimension, provided that the $q$-ary linear code performs well over an AWGN channel.
\end{itemize}
\end{example}

\begin{figure*}[t]
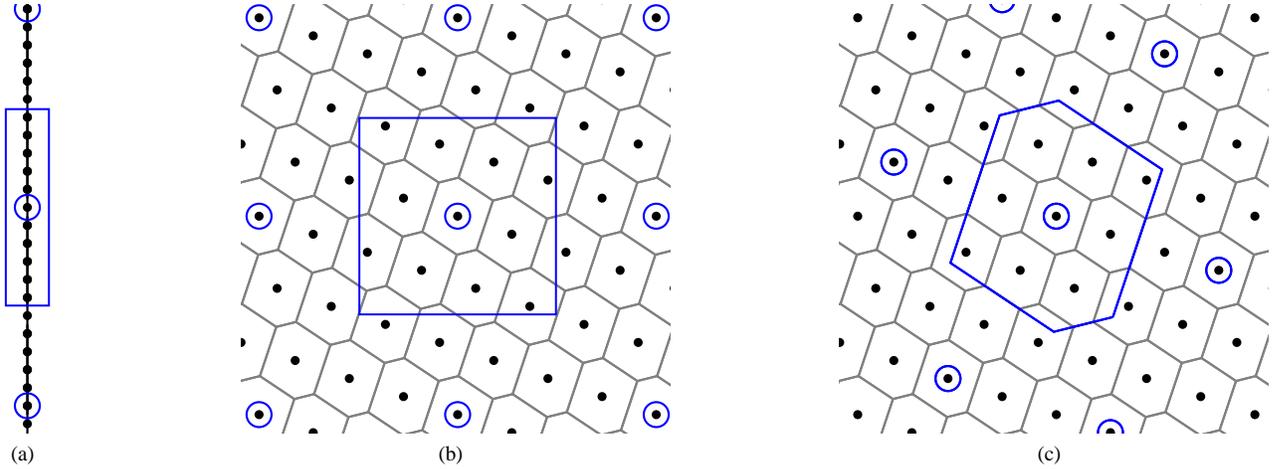

\label{fig:nestedlattices}
\subfloat[]{
\begin{minipage}[]{0.15\linewidth}
\centering
\psset{unit=0.6mm}
\begin{center}
\begin{pspicture*}(-10,-50)(10,45)

\rput(0,0){
\rput(0,0){\input{lattice1Dq11}}
\rput(0,-24){\input{lattice1Dq11}}
\rput(0,24){\input{lattice1Dq11}}
\rput(0,-48){\input{lattice1Dq11}}
\rput(0,48){\input{lattice1Dq11}}
\rput(0,0){\pscircle[linecolor=blue](0,0){3}\psframe[linecolor=blue,linewidth=0.75pt](-5,-22)(5,22)}
\rput(0,44){\pscircle[linecolor=blue](0,0){3}}
\rput(0,-44){\pscircle[linecolor=blue](0,0){3}}
}
\end{pspicture*}
\end{center} 
\end{minipage}}
\hspace{0.5cm}
\subfloat[]{
\begin{minipage}[]{0.4\linewidth}
\centering
\psset{unit=.60mm}
\begin{center}
\begin{pspicture*}(40,40)(135,135)

\rput(0,0){
\rput(0,0){\input{lattice2_3mod11}{\pscircle[linecolor=blue](0,0){3}}}
\rput(0,44){\input{lattice2_3mod11}{\pscircle[linecolor=blue](0,0){3}}}
\rput(44,0){\input{lattice2_3mod11}{\pscircle[linecolor=blue](0,0){3}}}
\rput(44,88){\input{lattice2_3mod11}{\pscircle[linecolor=blue](0,0){3}}}
\rput(44,132){\input{lattice2_3mod11}{\pscircle[linecolor=blue](0,0){3}}}
\rput(88,0){\input{lattice2_3mod11}{\pscircle[linecolor=blue](0,0){3}}}
\rput(0,88){\input{lattice2_3mod11}{\pscircle[linecolor=blue](0,0){3}}}
\rput(88,88){\input{lattice2_3mod11}{\pscircle[linecolor=blue](0,0){3}}}
\rput(132,88){\input{lattice2_3mod11}{\pscircle[linecolor=blue](0,0){3}}}
\rput(88,132){\input{lattice2_3mod11}{\pscircle[linecolor=blue](0,0){3}}}
\rput(132,44){\input{lattice2_3mod11}{\pscircle[linecolor=blue](0,0){3}}}
\rput(88,44){\input{lattice2_3mod11}{\pscircle[linecolor=blue](0,0){3}}}
\rput(44,44){\input{lattice2_3mod11}{\pscircle[linecolor=blue](0,0){3}}}
\rput(132,132){\input{lattice2_3mod11}{\pscircle[linecolor=blue](0,0){3}}}
\rput(66,66){\psframe[linecolor=blue,linewidth=0.75pt](0,0)(44,44)}
}

\end{pspicture*}
\end{center} 
\end{minipage}}
\hspace{0.5cm}
\subfloat[]{
\begin{minipage}[]{0.4\linewidth}
\centering
\psset{unit=.60mm}
\begin{center}
\begin{pspicture*}(40,40)(135,135)

\rput(0,0){
\rput(0,0){\input{lattice2_3mod11}}
\rput(0,44){\input{lattice2_3mod11}}
\rput(44,0){\input{lattice2_3mod11}}
\rput(44,88){\input{lattice2_3mod11}}
\rput(44,132){\input{lattice2_3mod11}}
\rput(88,0){\input{lattice2_3mod11}}
\rput(0,88){\input{lattice2_3mod11}}
\rput(88,88){\input{lattice2_3mod11}}
\rput(132,88){\input{lattice2_3mod11}}
\rput(88,132){\input{lattice2_3mod11}}
\rput(132,44){\input{lattice2_3mod11}}
\rput(88,44){\input{lattice2_3mod11}}
\rput(44,44){\input{lattice2_3mod11}}
\rput(132,132){\input{lattice2_3mod11}}

\rput(88,88){\input{lattice2_3mod11x3}}
\rput(88,88){\input{lattice2_3mod11x3}}
}

\end{pspicture*}
\end{center} 
\end{minipage}}
\caption{An illustration of the three different types of nested lattice codebooks given in Example~\ref{ex:latticecodebooks}. In all three cases the black points correspond to the fine lattice points, the blue circles to the coarse lattice points, and the blue polygon corresponds to the shaping region. In (a) the constellation for uncoded transmission with $q=11$ is illustrated. In (b) a $q$-ary linear code without shaping is shown, with $q=11$. In (c) a ``good'' nested lattice pair in two-dimensions is illustrated.}
\end{figure*}

\subsection{Description of the IF scheme}
\label{subsec:IFdescription}

In the IF scheme, the information bits to be transmitted are partitioned into $2M$ streams, labeled $\{1_{\Real},1_{\Imag},\ldots,M_{\Real},M_{\Imag}\}$. Each of the $2M$ streams is encoded by the nested lattice code $\mathcal{C}$, producing $2M$ row vectors, each in $\mathcal{C}\subset\RR^{1\times n}$. In particular, the stream $m_{\Real}$, consisting of $nR$ information bits, is mapped to a lattice point $\bt_{m_{\Real}}\in\mathcal{C}$. Then, a random dither $\bd_{m_{\Real}}\in\RR^{1\times n}$ uniformly distributed over $\CV_c$ and statistically independent of $\bt_{m_{\Real}}$, known to both the transmitter and the receiver, is used to produce the signal
\begin{align}
\bx_{m_{\Real}}=\left[\bt_{m_{\Real}}-\bd_{m_{\Real}}\right]\Mod_c\nonumber.
\end{align}
The signal $\bx_{m_{\Real}}$ is uniformly distributed over $\CV_c$ and is statistically independent of $\bt_{m_{\Real}}$ due to the Crypto Lemma~\cite[Lemma 1]{ez04}. It follows that
\begin{align}
\frac{1}{n}\mathbb{E}\|\bx_{m_{\Real}}\|^2=\sigma^2(\Lambda_c)=\frac{\Tsnr}{2}.\nonumber
\end{align}
A similar procedure is used to construct the signal $\bx_{m_{\Imag}}$.
The $m$th antenna transmits the signal $\bx_m=\bx_{m_{\Real}}+i\bx_{m_{\Imag}}\in\CC^{1\times n}$ over $n$ consecutive channel uses. Thus, the total transmission rate is $R_{\text{IF}}=2MR$ bits/channel use.

Let $\bX\triangleq[\bx_1^T \ \cdots \ \bx_M^T]^T\in\CC^{M\times n}$. The received signal is
\begin{align}
\bY=\bH\bX+\bZ,\nonumber
\end{align}
where $\bZ\in\CC^{N\times n}$ is a vector with i.i.d. circularly symmetric complex Gaussian entries. Letting the subscripts $\Real$ and $\Imag$ denote the real and imaginary parts of a matrix, respectively, the channel can be expressed by its real-valued representation
\begin{align}
\left[
  \begin{array}{c}
    \bY_{\Real} \\
    \bY_{\Imag} \\
  \end{array}
\right]
=\left[
   \begin{array}{cc}
     \bH_{\Real} & -\bH_{\Imag} \\
     \bH_{\Imag} & \bH_{\Real} \\
   \end{array}
 \right]\left[
  \begin{array}{c}
    \bX_{\Real} \\
    \bX_{\Imag} \\
  \end{array}
\right]+\left[
  \begin{array}{c}
    \bZ_{\Real} \\
    \bZ_{\Imag} \\
  \end{array}
\right],\label{realrep}
\end{align}
which will be written as $$\tilde{\bY}=\tilde{\bH}\tilde{\bX}+\tilde{\bZ}$$ for notational compactness. Let $$\tilde{\bT}\triangleq[\bt^T_{1_{\Real}} \ \cdots \ \bt^T_{M_{\Real}} \ \bt^T_{1_{\Imag}} \ \cdots \ \bt^T_{M_{\Imag}}]^T$$ be a $2M\times n$ real-valued matrix whose rows consist of the lattice points corresponding to the $2M$ bit streams, and $$\tilde{\bD}\triangleq[\bd^T_{1_{\Real}} \ \cdots \ \bd^T_{M_{\Real}} \ \bd^T_{1_{\Imag}} \ \cdots \ \bd^T_{M_{\Imag}}]^T$$ be a $2M\times n$ real-valued matrix whose rows correspond to the $2M$ different dither vectors.

\begin{figure*}[]
\begin{center}
\psset{unit=0.6mm}
\begin{pspicture}(0,-10)(250,80)

\rput(0,78){Transmitter}
\psframe[linestyle=dashed](-18,-5)(47,75)

\rput(68,78){Channel}
\psframe[linestyle=dashed](56.5,-5)(111,75)

\rput(130,78){Receiver}
\psframe[linestyle=dashed](114,-5)(250,75)

\rput(0,50){
\rput(0,0){info. bits 1} \psline{->}(15,0)(25,0) \psframe(25,-5)(45,5)
\rput(35,0){Enc}\psline{->}(45,0)(60,0) \rput(53,3.5){$\bx_1$}
}

\rput(0,30){\rput(35,2){$\vdots$}}

\rput(0,10){
\rput(0,0){info. bits M} \psline{->}(15,0)(25,0) \psframe(25,-5)(45,5)
\rput(35,0){Enc}\psline{->}(45,0)(60,0) \rput(53,3.5){$\bx_M$}
}

\rput(60,0){
\psframe(0,0)(25,60)\rput(12.5,30){$\bH$}}

\rput(85,0){
\rput(0,55){
\psline{->}(0,0)(12,0)
\pscircle(15,0){3}
\psline(12,0)(18,0)
\psline(15,-2)(15,2)
\rput(18,0){\psline{->}(0,0)(15,0)}
\rput(22,3){$\by_1$}
\psline{->}(15,8)(15,3)
\rput(15,12){$\bz_1$}
}

\rput(0,30){\rput(15,2){$\vdots$}}

\rput(0,5){
\psline{->}(0,0)(12,0)
\pscircle(15,0){3}
\psline(12,0)(18,0)
\psline(15,-2)(15,2)
\rput(18,0){\psline{->}(0,0)(15,0)}
\rput(22,3){$\by_N$}
\psline{->}(15,8)(15,3)
\rput(15,12){$\bz_N$}
}
}

\rput(118,0){
\psframe(0,0)(25,60)\rput(12.5,30){$\bB$}}

\rput(143,0){
\rput(0,50){
\psline{->}(0,0)(10,0) \psframe(10,-5)(30,5)
\rput(20,0){Dec}\psline{->}(30,0)(45,0) \rput(38,3.5){$\hat{\bv}_1$}
}

\rput(0,30){\rput(21.5,2){$\vdots$}}

\rput(0,10){
\psline{->}(0,0)(10,0) \psframe(10,-5)(30,5)
\rput(20,0){Dec}\psline{->}(30,0)(45,0) \rput(38,3.5){$\hat{\bv}_M$}
}
}

\rput(188,0){
\psframe(0,0)(25,60)\rput(12.5,30){$\bA^{-1}$}}

\rput(213,0){
\rput(0,50){
\psline{->}(0,0)(30,0)
\rput(15,5){$\widehat{\text{info. bits 1}}$}
}

\rput(0,30){\rput(15,2){$\vdots$}}

\rput(0,10){
\psline{->}(0,0)(30,0)
\rput(15,5){$\widehat{\text{info. bits M}}$}}
}
\end{pspicture}
\end{center}
\caption{A schematic overview of the integer-forcing transmitter and receiver. For simplicity, the dithers are not depicted in the figure, and a real-valued channel is assumed. At the transmitter, the information bits are split to $M$ streams. Each stream is encoded by the same linear codebook and transmitted by one of the transmit antennas. The receiver first applies the equalizing matrix $\bB$ whose role is to equalize the channel $\bH$ to an equivalent channel with transfer matrix approximately equal to $\bA$. The equalizer produces $M$ outputs, each of which is an integer-valued linear combination of the transmitted codewords plus effective noise. Each one of these outputs is decoded separately, and finally the outputs of the $M$ decoders are multiplied by $\bA^{-1}$ to produce the transmitted codewords. The codewords are then mapped to information bits (this step is not depicted in the figure).}
\label{fig:IFarch}
\end{figure*}
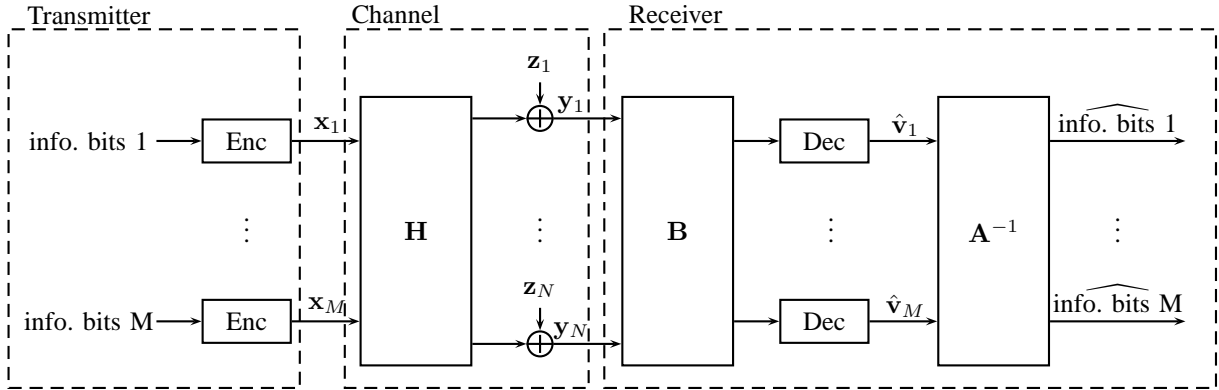 
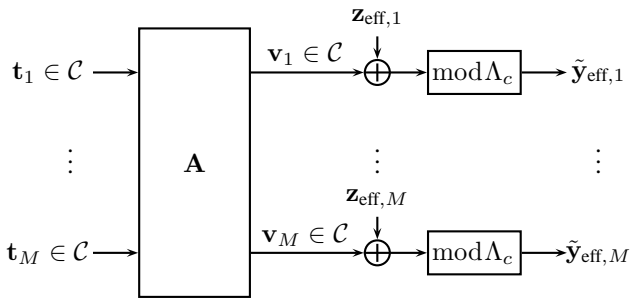
\begin{figure}[]
\begin{center}
\psset{unit=0.6mm}
\begin{pspicture}(0,0)(125,70)

\rput(0,50){
\rput(0,0){$\bt_1\in\mathcal{C}$} \psline{->}(10,0)(20,0)
}

\rput(0,30){\rput(5,2){$\vdots$}}

\rput(0,10){\rput(0,0){$\bt_M\in\mathcal{C}$} \psline{->}(10,0)(20,0)
}

\rput(20,0){
\psframe(0,0)(25,60)\rput(12.5,30){$\bA$}}

\rput(45,0){
\rput(0,50){
\psline{->}(0,0)(25,0)
\rput(12,4){$\bv_1\in\mathcal{C}$}
\pscircle(28,0){3}
\psline(25,0)(31,0)
\psline(28,-2)(28,2)
\rput(31,0){\psline{->}(0,0)(8,0)}
\psline{->}(28,8)(28,3)
\rput(28,12){$\bz_{\text{eff},1}$}
\rput(39,0){\psframe(0,-5)(21,5)\rput(10,0){$\Mod_c$}}
\rput(60,0){\psline{->}(0,0)(10,0)}
\rput(77,0){$\tilde{\by}_{\text{eff},1}$}
}

\rput(0,30){\rput(28,2){$\vdots$}\rput(77,2){$\vdots$}}

\rput(0,10){
\psline{->}(0,0)(25,0)
\rput(12,4){$\bv_M\in\mathcal{C}$}
\pscircle(28,0){3}
\psline(25,0)(31,0)
\psline(28,-2)(28,2)
\rput(31,0){\psline{->}(0,0)(8,0)}
\psline{->}(28,8)(28,3)
\rput(28,12){$\bz_{\text{eff},M}$}
\rput(39,0){\psframe(0,-5)(21,5)\rput(10,0){$\Mod_c$}}
\rput(60,0){\psline{->}(0,0)(10,0)}
\rput(77,0){$\tilde{\by}_{\text{eff},M}$}
}
}
\end{pspicture}
\end{center}
\caption{An illustration of the effective channel obtained when integer-forcing equalization is used. The effective channel consists of $M$ parallel sub-channels. The output of each sub-channel is an integer-valued linear combination of lattice points, which is itself a lattice point, plus effective noise, modulo the coarse lattice $\Lambda_c$.}
\label{fig:IFeffchannel}
\end{figure} 

The IF receiver chooses an equalizing matrix $\bB\in\RR^{2M\times 2N}$ and a full-rank target integer-valued matrix $\bA\in\ZZ^{2M\times 2M}$, and computes
\begin{align}
\tilde{\bY}_{\text{eff}}&=\left[\bB\tilde{\bY}+\bA\tilde\bD\right]\Mod_c\nonumber\\
&=\left[\bA\tilde{\bX}+\bA\tilde{\bD}+(\bB\tilde{\bH}-\bA)\tilde{\bX}+\bB\tilde{\bZ}\right]\Mod_c\nonumber\\
&=\left[\bA\tilde{\bT}+(\bB\tilde{\bH}-\bA)\tilde{\bX}+\bB\tilde{\bZ}\right]\Mod_c\nonumber\\
&=\left[\bV+\bZe\right]\Mod_c,\label{eqoutput}
\end{align}
where
\begin{align}
\bV\triangleq\left[\bA\tilde{\bT}\right]\Mod_c
\label{modeq}
\end{align}
is a $2M\times n$ real-valued matrix with each row being a codeword in $\mathcal{C}$ owing to the linearity of the code, $$\bZe\triangleq (\bB\tilde{\bH}-\bA)\tilde{\bX}+\bB\tilde{\bZ}$$ is additive noise statistically independent of $\bV$ (as $\tilde{\bX}$, as well as $\tilde{\bZ}$ are statistically independent of $\tilde{\bT}$), and the notation $\mod \Lambda_c$ is to be understood as reducing \emph{each row} of the obtained matrix modulo the coarse lattice.
Each row of $\tilde{\bY}_{\text{eff}}$ is the modulo sum of a codeword and effective noise. Thus, the IF receiver transforms the original MIMO channel into a set of $2M$ point-to-point modulo-additive sub-channels
\begin{align}
\tilde{\by}_{\text{eff},k}=\left[\bv_k+\bz_{\text{eff},k}\right]\Mod_c, \ \ k=1,\ldots,2M.\label{effsubchannel}
\end{align}
The additive noise vectors $\bz_{\text{eff},1},\ldots,\bz_{\text{eff},2M}$ are not statistically independent. Therefore, strictly speaking, the $2M$ effective channels $\tilde{\by}_{\text{eff},1},\ldots,\tilde{\by}_{\text{eff},2M}$ are not parallel. However, the IF decoder ignores the correlation between the noise vectors and decodes the output of each sub-channel separately.\footnote{Some improvement can be obtained by exploiting these correlations~\cite{oen13,znoeg10}. Yet, we do not pursue this possibility in the present paper.} If decoding is successful over all $2M$ sub-channels, the receiver has access to $\bV$, from which it can recover the matrix $\tilde{\bT}$ by solving the (modulo) set\footnote{In~\cite{ozeng11} it is shown that it suffices that $\bA$ is invertible over $\RR$ in order to recover $\tilde{\bT}$ from $\bV$.} of equations~\eqref{modeq}. See Figures~\ref{fig:IFarch} and~\ref{fig:IFeffchannel}.

\vspace{2mm}

Let $\ba_k^T$ and $\bb_k^T$ be the $k$th rows of $\bA$ and $\bB$, respectively, and define the effective variance of $\bz_{\text{eff},k}$ as
\begin{align}
\sigma^2_{\text{eff},k}&\triangleq\frac{1}{n}\mathbb{E}\left\|\bz_{\text{eff},k}\right\|^2\nonumber\\
&=\frac{1}{n}\mathbb{E}\left\|(\bb_k^T\tilde{\bH}-\ba_k^T)\tilde{\bX}+\bb_k^T\tilde{\bZ}\right\|^2\nonumber\\
&=\frac{\Tsnr}{2}\|(\bb_k^T\tilde{\bH}-\ba_k^T)\|^2+\frac{1}{2}\|\bb_k^T\|^2.\nonumber
\end{align}
A natural criterion for choosing the equalizing matrix $\bB$ and the target integer-valued matrix $\bA$ is to minimize the effective noise variances. It turns out~\cite{zneg12IT} that for a given matrix $\bA$, the optimal choice of $\bB$ under this criterion is
\begin{align}
\bB^{\text{opt}}=\bA\tilde{\bH}^T\left(\frac{1}{\Tsnr}\bI+\tilde{\bH}\tilde{\bH}^T\right)^{-1}.\label{wiener}
\end{align}
The matrix in~\eqref{wiener} can be interpreted as first applying the linear MMSE estimator of $\tilde{\bX}$ form $\tilde{\bY}$, and then multiplying the result by the integer-valued matrix $\bA$. In general, the estimation errors after linear MMSE estimation may be highly correlated, and have different powers. The role $\bA$ plays here is in decreasing these correlations and balancing the power of the remaining estimation errors. The freedom to choose any full-rank $\bA\in\ZZ^{2M\times 2M}$ and not just $\bA=\bI$ comes from the fact that any integer-linear combination of codewords is a codeword itself.
Setting $\bB$ as in~\eqref{wiener} results in the effective variances
\begin{align}
\sigma^2_{\text{eff},k}=\frac{\Tsnr}{2}\ba_k^T\left(\bI+\Tsnr\tilde{\bH}^T\tilde{\bH}\right)^{-1}\ba_k,\nonumber
\end{align}
for $k=1,\ldots,2M$.

Define the effective signal-to-noise ratio (SNR) at the $k$th sub-channel as
\begin{align}
\Tsnr_{\text{eff},k}&\triangleq\frac{\sigma^2(\Lambda_c)}{\sigma^2_{\text{eff},k}}\nonumber\\
&=\frac{\frac{\Tsnr}{2}}{\frac{\Tsnr}{2}\ba_k^T\left(\bI+\Tsnr\tilde{\bH}^T\tilde{\bH}\right)^{-1}\ba_k}\nonumber\\
&=\left(\ba_k^T\left(\bI+\Tsnr\tilde{\bH}^T\tilde{\bH}\right)^{-1}\ba_k\right)^{-1},\label{kSNReff}
\end{align}
and let
\begin{align}
\Tsnr_{\text{eff}}\triangleq\min_{k=1,\ldots,2M}\Tsnr_{\text{eff},k}.\label{SNReff}
\end{align}
For IF equalization to be successful, decoding over all $2M$ sub-channels should be correct. Therefore, the worst sub-channel constitutes a bottleneck. For this reason, the total performance of the receiver is dictated by $\Tsnr_{\text{eff}}$.

\subsection{Achievable rates for IF}
\label{subsec:performance}
When the codebook $\mathcal{C}$ is constructed from a good pair of nested lattices (see Example~\ref{ex:latticecodebooks}), the distribution of the effective noise at each sub-channel $k$, which is a linear combination of an AWGN and $2M$ dither vectors, approaches (with the code's block length) that of an AWGN with zero mean and variance $\sigma^2_{\text{eff},k}$~\cite{ng11IT}. Good nested lattice codebooks can achieve any rate satisfying
\begin{align}
R<\frac{1}{2}\log\left(\Tsnr_{\text{eff},k}\right)\label{rateAWGN}
\end{align}
over a $\bmod\hspace{0.5mm}\mymathhyphen\Lambda_c$ AWGN channel with signal-to-noise ratio $\Tsnr_{\text{eff},k}$~\cite{ez04,ng11IT}. Since $\bv_k$ is a codeword from a good nested lattice code and $\bz_{\text{eff},k}$ approaches an AWGN in distribution, $\bv_k$ can be decoded~\cite{zneg12IT,ng11IT} from $\tilde{\by}_{\text{eff},k}$ as long as the rate of the codebook $\mathcal{C}$ satisfies~\eqref{rateAWGN}. It follows that as long as
\begin{align}
R<\frac{1}{2}\log\left(\Tsnr_{\text{eff}}\right),\nonumber
\end{align}
all sub-channels $k=1,\ldots,2M$ can decode their linear combinations $\bv_k$ without error, and therefore IF equalization can achieve any rate satisfying
\begin{align}
R_{\text{IF}}&<2M\frac{1}{2}\log\left(\Tsnr_{\text{eff}}\right)\nonumber\\
&=M\log\left(\Tsnr_{\text{eff}}\right).\label{Rif}
\end{align}
As mentioned in Example~\ref{ex:latticecodebooks}, good nested lattice codebooks can be difficult to implement in practice. A more appealing alternative may be to use a $q$-ary linear code without shaping. In this case, the effective noise $\bz_{\text{eff},k}$ at each sub-channel is a linear combination of an AWGN and $2M$ random dithers uniformly distributed over the Voronoi region of a $1$-D integer lattice. This effective noise is i.i.d. (in contrast to the case where a higher-dimensional coarse lattice is used where $\bz_{\text{eff},k}$ has memory). It was shown in~\cite[Remark 3]{oe12} that, for a prime $q$ large enough, $q$-ary linear codes without shaping can achieve any rate satisfying
\begin{align}
R<\frac{1}{2}\log\left(\Tsnr_{\text{eff}}\right)-\frac{1}{2}\log\left(\frac{2\pi e}{12}\right)\nonumber
\end{align}
over a modulo channel with additive i.i.d. effective noise $\bz_{\text{eff},k}$.
Therefore, IF equalization using $q$-ary linear codes without shaping can achieve any rate satisfying
\begin{align}
R_{\text{IF,q-ary}}<M\log\left(\Tsnr_{\text{eff}}\right)-M\log\left(\frac{2\pi e}{12}\right).\label{Rifqary}
\end{align}
When a specific $q$-ary linear code (such as an LDPC code or a turbo code) is used, the achievable rate is further degraded by $2M$ times the code's gap-to-capacity at the target error probability.

Finally, consider the case of uncoded transmission. In this case, \mbox{$\Lambda_f=\gamma\ZZ$} and \mbox{$\Lambda_c=\gamma q\ZZ$}, where \mbox{$\gamma=\sqrt{12\Tsnr/q^2}$} is chosen so as to meet the power constraint, and $q>1$ is an integer (see Example~\ref{ex:latticecodebooks}). The performance of uncoded transmission with IF equalization followed by a simple slicer is characterized by the following lemma.

\vspace{1mm}

\begin{lemma}
\label{lem:IFuncoded}
The error probability of the IF receiver with uncoded transmission rate $R_{\text{IF}}$ is upper bounded by
\begin{align}
P_{e,\text{IF-uncoded}}&\leq 4M\exp\left\{-\frac{3}{2}2^{\frac{1}{M}\left(M\log(\Tsnr_{\text{eff}})-R_{\text{IF}}\right)} \right\}.\label{peuncoded}
\end{align}
\end{lemma}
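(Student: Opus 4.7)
The plan is to union-bound over the $2M$ effective sub-channels produced by IF equalization, and to control each sub-channel's slicer error probability via a Chernoff bound combined with a tight sub-Gaussianity estimate for the effective noise.

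First, I would collect the relevant numerical relations for the uncoded case. With $\Lambda_f=\gamma\ZZ$, $\Lambda_c=\gamma q\ZZ$ and $\gamma=\sqrt{6\Tsnr}/q$, the half-minimum-distance of $\Lambda_f$ is $\gamma/2$ (so that $\gamma^2/4=3\Tsnr/(2q^2)$), and $q^2=2^{R_{\text{IF}}/M}$ since $R_{\text{IF}}=2M\log q$. Working with $n=1$, the $k$-th scalar sub-channel from~\eqref{effsubchannel} has noise
\begin{align*}
z_{\text{eff},k}=(\bb_k^T\tilde{\bH}-\ba_k^T)\tilde{\bx}+\bb_k^T\tilde{\bz},
\end{align*}
where, by the Crypto Lemma, the $2M$ entries of $\tilde{\bx}$ are i.i.d.\ uniform on $[-\gamma q/2,\gamma q/2]$, and the $2N$ entries of $\tilde{\bz}$ are i.i.d.\ $\mathcal{N}(0,1/2)$. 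A nearest-neighbor slicer on $\gamma\ZZ$ can make an error only when $|z_{\text{eff},k}|>\gamma/2$, so the union bound gives $P_{e,\text{IF-uncoded}}\leq\sum_{k=1}^{2M}\Pr(|z_{\text{eff},k}|>\gamma/2)$.

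The core step is to establish that $z_{\text{eff},k}$ is sub-Gaussian with parameter exactly equal to its variance $\sigma^2_{\text{eff},k}=\sigma^2(\Lambda_c)/\Tsnr_{\text{eff},k}=\Tsnr/(2\Tsnr_{\text{eff},k})$. For the Gaussian entries this is immediate. For the uniform entries I would invoke the tight MGF inequality $\mathbb{E}[e^{tU}]=\sinh(ta)/(ta)\leq e^{t^2 a^2/6}$ for $U\sim\mathrm{Unif}[-a,a]$, which states that such a $U$ is sub-Gaussian with parameter equal to its variance $a^2/3$. Independence then yields
\begin{align*}
\mathbb{E}\!\left[e^{s z_{\text{eff},k}}\right]\leq\exp\!\left(\tfrac{s^2}{2}\Bigl(\tfrac{\Tsnr}{2}\|\bb_k^T\tilde{\bH}-\ba_k^T\|^2+\tfrac{1}{2}\|\bb_k\|^2\Bigr)\right)=\exp\!\left(\tfrac{s^2\sigma^2_{\text{eff},k}}{2}\right),
\end{align*}
and a standard Chernoff bound produces $\Pr(|z_{\text{eff},k}|>\gamma/2)\leq 2\exp\!\bigl(-\gamma^2/(8\sigma^2_{\text{eff},k})\bigr)=2\exp\!\bigl(-3\Tsnr_{\text{eff},k}/(2q^2)\bigr)$. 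Using $\Tsnr_{\text{eff}}\leq\Tsnr_{\text{eff},k}$ to weaken each term uniformly, summing over $k=1,\ldots,2M$, and substituting $q^2=2^{R_{\text{IF}}/M}$ so that $\Tsnr_{\text{eff}}/q^2=2^{(M\log\Tsnr_{\text{eff}}-R_{\text{IF}})/M}$, reproduces exactly the claimed bound.

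The main obstacle is securing the sharp sub-Gaussian constant for the uniform distribution. A naive Hoeffding bound, which would treat $U\in[-a,a]$ as sub-Gaussian with parameter $a^2$, loses a factor of three and yields $1/2$ rather than $3/2$ in the final exponent. The sharper inequality $\sinh(x)/x\leq e^{x^2/6}$ is standard but nontrivial; it can be verified by a term-by-term comparison of the Taylor series of both sides, or by checking non-positivity of $\log(\sinh(x)/x)-x^2/6$ through its derivative. Once this inequality is in hand, everything else in the argument is routine Chernoff and union bounding.
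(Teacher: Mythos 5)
Your proposal is correct and follows essentially the same route as the paper's proof in Appendix~A: union bound over the $2M$ slicers, reduction of each slicer error to the tail event $\{|z_{\text{eff},k}|>\gamma/2\}$, and a Chernoff bound using the sharp MGF inequality $\sinh(x)/x\le e^{x^2/6}$ to get the sub-Gaussian parameter equal to the variance for the uniform dither components. The paper packages that last step as a standalone lemma (its Lemma~3, citing \cite{fsk11}), but the content and the resulting constants are identical to yours.
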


\vspace{1mm}

\begin{proof}
See Appendix~\ref{app:uncoded}
\end{proof}

\vspace{1mm}

\begin{remark}
\label{rem:LR}
Integer-forcing equalization with uncoded transmission is quite similar to the extensively studied lattice-reduction-aided linear decoders framework~\cite{yw02,wf03,je10}. However, two subtle differences should be pointed out. First, under the framework of LR-aided linear decoding, the target integer matrix $\bA$ has to be unimodular, i.e., it has to satisfy $|\det(\bA)|=1$, whereas in IF equalization $\bA$ is only required to be full-rank. Second, the use of the dithers in IF equalization results in statistical independence between $v_k$ and $z_{\text{eff},k}$ at each of the $2M$ sub-channels. This allows for an exact rigorous analysis of the error probability, which is seemingly difficult under the LR framework.
\end{remark}

\subsection{Bounding the Effective SNR for an optimal choice of $\bA$}
\label{subsec:snreffbound}
In this subsection, we derive a lower bound on $\Tsnr_{\text{eff}}$, which will subsequently be used to lower bound the achievable rate of IF. Since the IF scheme is compatible with any choice of full-rank integer matrix $\bA\in\ZZ^{2M\times 2M}$, we would like to choose $\bA$ so as to maximize $\Tsnr_{\text{eff}}$. We denote the rate-maximizing target integer-valued matrix by $\bA^{\text{opt}}$. For the remainder of the paper $\Tsnr_{\text{eff}}$ refers to the effective SNR corresponding to the choice $\bA=\bA^{\text{opt}}$.

Using~\eqref{kSNReff} and~\eqref{SNReff}, this maximization criterion translates to
\begin{align}
\bA^{\text{opt}}&=\argmin_{\substack{{\bA\in\ZZ^{2M\times 2M}}\\ {\det(\bA)\neq 0}}} \ \max_{k=1\ldots,2M}\ba_k^T\left(\bI+\Tsnr\tilde{\bH}^T\tilde{\bH}\right)^{-1}\ba_k.\nonumber
\end{align}
The matrix \mbox{$\left(\bI+\Tsnr\tilde{\bH}^T\tilde{\bH}\right)^{-1}$} is symmetric and positive definite, and therefore it admits a Cholesky decomposition
\begin{align}
\left(\bI+\Tsnr\tilde{\bH}^T\tilde{\bH}\right)^{-1}=\bL \bL^T,\label{choldef}
\end{align}
where $\bL$ is a lower triangular matrix with strictly positive diagonal entries. With this notation the optimization criterion becomes
\begin{align}
\bA^{\text{opt}}=\argmin_{\substack{{\bA\in\ZZ^{2M\times 2M}}\\ {\det(\bA)\neq 0}}} \ \max_{k=1\ldots,2M}\|\bL^T\ba_k\|^2.\nonumber
\end{align}
Denote by $\Lambda(\bL^T)$ the $2M$ dimensional lattice spanned by the matrix $\bL^T$, i.e.,
\begin{align}
\Lambda(\bL^T)\triangleq\left\{\bL^T\ba \ : \ \ba\in\ZZ^{2M} \right\}.\nonumber
\end{align}
It follows that $\bA^{\text{opt}}$ should consist of the set of $2M$ linearly independent integer-valued vectors that result in the shortest set of linearly independent lattice vectors in $\Lambda(\bL^T)$.

\vspace{1mm}

\begin{definition}[Successive minima]
\label{def:sucmin}
Let $\Lambda(\mathbf{G})$ be a lattice spanned by the full-rank matrix $\mathbf{G} \in \mathbb{R}^{K \times K}$. For \mbox{$k=1,\ldots,K$}, we define the $k$th successive minimum as
\begin{align}
\lambda_k(\mathbf{G}) \triangleq \inf\left\{r \ : \ \dim\left(\Span\left(\Lambda(\mathbf{G})\bigcap \mathcal{B}(\mathbf{0},r)\right)\right)\geq k\right\}\nonumber
\end{align}
where $\mathcal{B}(\mathbf{0},r)=\left\{\bx\in\RR^{K} \ : \ \|\bx\|\leq r\right\}$ is the closed ball of radius $r$ around $\mathbf{0}$. In words, the $k$th successive minimum of a lattice is the minimal radius of a ball centered around $\mathbf{0}$ that contains $k$ linearly independent lattice points.
\end{definition}

\vspace{1mm}

With the above definition of successive minima, the effective signal-to-noise ratio, when the optimal integer-valued matrix $\bA^{\text{opt}}$ is used, can be written as
\begin{align}
\Tsnr_{\text{eff}}=\frac{1}{\lambda^2_{2M}(\mathbf{L}^T)}.\label{snrtmp}
\end{align}
Bounding the value of the $2M$th successive minimum of a lattice is seemingly difficult. Fortunately, a transference theorem by Banaszczyk~\cite{Banaszczyk93} relates the $2M$th successive minimum of a lattice to the first successive minimum of its dual lattice. Following the derivation from~\cite[Proof of Theorem 5]{zneg12IT}, we proceed to bound $\Tsnr_{\text{eff}}$ using this relation.

\vspace{2mm}

\begin{definition}[Dual lattice]
For a lattice $\Lambda(\bG)$ with a generating full-rank matrix $\bG\in\RR^{2M\times 2M}$ the dual lattice is defined by
\begin{align}
\Lambda^*(\bG)&\triangleq\Lambda\left((\bG^T)^{-1}\right)\nonumber\\
&=\left\{(\bG^T)^{-1}\ba \ : \ \ba\in\ZZ^{2M} \right\}.\nonumber
\end{align}
\label{def:dual}
\end{definition}

\vspace{2mm}

\begin{theorem}[Banaszczyk {\cite[Theorem 2.1]{Banaszczyk93}}]
Let $\Lambda(\bG)$ be a lattice with a full-rank generating matrix $\bG\in\RR^{K\times K}$ and let \mbox{$\Lambda^*(\bG)=\Lambda\left((\bG^T)^{-1}\right)$} be its dual lattice. The successive minima of $\Lambda(\bG)$ and $\Lambda^*(\bG)$ satisfy the following inequality
\begin{align}
\lambda_k \left(\bG\right)\lambda_{K-k+1}\left((\bG^T)^{-1}\right) < K, \ \ \ \forall k  = 1,2, \ldots,K.\nonumber
\end{align}
\label{thm:dualsuc}
\end{theorem}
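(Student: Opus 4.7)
The plan is to follow Banaszczyk's original approach through Gaussian measures on lattices. First I would introduce the Gaussian function $\rho(\bx)\triangleq e^{-\pi\|\bx\|^2}$ and the Gaussian mass of a lattice, $\rho(\Lambda)\triangleq\sum_{\bv\in\Lambda}\rho(\bv)$. Because the Gaussian is a fixed point of the (appropriately normalized) Fourier transform, the Poisson summation formula yields the exact duality $\det(\bG)\cdot\rho(\Lambda(\bG))=\rho(\Lambda^*(\bG))$. This identity is the bridge that lets information about short vectors of $\Lambda(\bG)$ be converted into information about short vectors of $\Lambda^*(\bG)$, and vice versa.

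Second, I would establish Banaszczyk's concentration lemma: for every $K$-dimensional lattice $\Lambda$ and every $c\geq 1/\sqrt{2\pi}$,
\begin{align}
\rho\bigl(\Lambda\setminus c\sqrt{K}\,\mathcal{B}\bigr)\;\leq\;\bigl(c\sqrt{2\pi e}\cdot e^{-\pi c^2}\bigr)^{K}\rho(\Lambda),\nonumber
\end{align}
where $\mathcal{B}$ is the closed unit ball centered at the origin. Intuitively, this says that when weighted by $\rho$, essentially all of the mass of any lattice sits inside a ball of radius $\sqrt{K/(2\pi)}$. The proof — which multiplies $\rho$ by a carefully chosen positive function with non-negative Fourier transform and invokes Poisson summation a second time on the product — is the technical heart of the argument and, in my view, the main obstacle one has to confront.

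Third, armed with the concentration lemma, the transference inequality follows by contradiction. Suppose that $\lambda_k(\bG)\lambda_{K-k+1}((\bG^T)^{-1})\geq K$. By rescaling $\bG$ (which rescales primal and dual successive minima reciprocally), one may assume $\lambda_k(\bG)\geq\sqrt{K}$ and $\lambda_{K-k+1}((\bG^T)^{-1})\geq\sqrt{K}$ simultaneously. Pick $c$ slightly larger than $1/\sqrt{2\pi}$ so that $c\sqrt{K}<\sqrt{K}$. By Definition~\ref{def:sucmin}, every point of $\Lambda(\bG)$ inside the ball of radius $c\sqrt{K}$ lies in some $(k-1)$-dimensional subspace $V$, while every point of $\Lambda^*(\bG)$ inside the same ball lies in some $(K-k)$-dimensional subspace $V^*$. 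The concentration lemma therefore forces nearly all of the Gaussian mass of each lattice onto a proper subspace. Feeding this back into the Poisson identity — which pins the product of the two Gaussian masses to a volumetric constant — yields a contradiction, because Gaussian mass squeezed into a lower-dimensional slice of each lattice cannot support the rigid balance demanded by duality. Tracking the constants carefully upgrades the contradiction to the strict inequality $\lambda_k(\bG)\lambda_{K-k+1}((\bG^T)^{-1})<K$.
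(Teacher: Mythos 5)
The paper itself does not prove this result --- it simply defers to Banaszczyk's original paper --- so there is no in-paper argument to compare against; the value of the review lies in checking your reconstruction on its own terms. Your sketch does track the architecture of Banaszczyk's proof: the Gaussian mass function, the Poisson-summation duality $\det(\bG)\cdot\rho(\Lambda(\bG))=\rho(\Lambda^*(\bG))$, and the tail/concentration lemma are exactly the right ingredients, and you have correctly identified the concentration lemma as the hard technical step. The reduction by rescaling so that both $\lambda_k(\bG)$ and $\lambda_{K-k+1}((\bG^T)^{-1})$ exceed $\sqrt{K}$, and the observation that all primal lattice points of length below $\sqrt{K}$ then live in a subspace $V$ of dimension $k-1$ while all such dual points live in a subspace $V^*$ of dimension $K-k$, are also right.

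The gap is in the last step, where you assert that ``feeding this back into the Poisson identity'' produces a contradiction because of a ``rigid balance demanded by duality.'' The Poisson identity relates only the \emph{total} Gaussian masses $\rho(\Lambda)$ and $\rho(\Lambda^*)$; there is nothing in that identity alone that prevents nearly all of each mass from concentrating inside a proper subspace, and the subspaces $V$ and $V^*$ carry no built-in relation to one another --- they have complementary dimensions $k-1$ and $K-k$, but they need not be orthogonal complements, and together they span at most a hyperplane of $\RR^K$. To actually close the argument one has to descend to intermediate rank: for a lattice subspace $V$, the sublattice $\Lambda\cap V$ and the orthogonal projection of $\Lambda^*$ onto $V$ form a dual pair of rank $k-1$ (and likewise for the projection of $\Lambda$ onto $V^\perp$ paired with $\Lambda^*\cap V^\perp$, of rank $K-k+1$), and the concentration lemma must then be applied to these lower-dimensional lattices so that the primal and dual constraints act in the \emph{same} ambient dimension. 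This sublattice/projection duality is precisely what your outline skips, and without it the contradiction you describe does not follow from the two estimates you have stated.
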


\begin{proof}
See~\cite{Banaszczyk93}
\end{proof}

\vspace{2mm}

The following theorem gives a lower bound for $\Tsnr_{\text{eff}}$.

\vspace{2mm}

\begin{theorem}
\label{thm:SNReff}
Consider the complex MIMO channel \mbox{$\by=\bH\bx+\bz$} with $M$ transmit antennas and $N$ receive antennas, power constraint \mbox{$\mathbb{E}(\bx^{\dagger}\bx)\leq M\cdot\Tsnr$}, and additive noise $\bz$ with i.i.d. circularly symmetric complex Gaussian entries with zero mean and unit variance. The effective signal-to-noise ratio when integer-forcing equalization is applied is lower bounded by
\begin{align}
\Tsnr_{\text{eff}}>\frac{1}{4M^2}\min_{\ba\in\ZZ^{M}+i\ZZ^{M}\setminus\mathbf{0}}\ba^{\dagger}\left(\bI+\Tsnr{\bH}^{\dagger}{\bH}\right)\ba.\label{snrbound}
\end{align}
\end{theorem}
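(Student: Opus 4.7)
The plan is to start from the identity $\Tsnr_{\text{eff}}=1/\lambda_{2M}^2(\bL^T)$ in \eqref{snrtmp} and bound $\lambda_{2M}(\bL^T)$ from above using Banaszczyk's transference theorem (Theorem~\ref{thm:dualsuc}). Applying that theorem to $\bG=\bL^T$ with dimension $K=2M$ and taking $k=2M$ gives
\begin{align}
\lambda_{2M}(\bL^T)\,\lambda_{1}\bigl((\bL^T)^{T-1}\bigr)=\lambda_{2M}(\bL^T)\,\lambda_{1}(\bL^{-1})<2M,\nonumber
\end{align}
so that $\Tsnr_{\text{eff}}>\lambda_{1}^{2}(\bL^{-1})/(4M^2)$. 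The first successive minimum of the dual lattice is just its shortest nonzero vector, hence
\begin{align}
\lambda_{1}^{2}(\bL^{-1})=\min_{\ba\in\ZZ^{2M}\setminus\mathbf{0}}\|\bL^{-1}\ba\|^{2}=\min_{\ba\in\ZZ^{2M}\setminus\mathbf{0}}\ba^{T}(\bL\bL^{T})^{-1}\ba,\nonumber
\end{align}
and by the Cholesky definition \eqref{choldef} we have $(\bL\bL^{T})^{-1}=\bI+\Tsnr\,\tilde{\bH}^{T}\tilde{\bH}$. This already yields the bound in real-valued form; the remaining step is to rewrite it in the original complex-valued channel.

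The real representation \eqref{realrep} is constructed so that for $\ba=[\ba_{\Real}^{T}\ \ba_{\Imag}^{T}]^{T}\in\ZZ^{2M}$ and the associated complex vector $\bc=\ba_{\Real}+i\ba_{\Imag}\in\ZZ^{M}+i\ZZ^{M}$, one has $\tilde{\bH}\ba$ equal to the real-valued representation of $\bH\bc$. I would verify this by direct block multiplication, which shows $\|\tilde{\bH}\ba\|^{2}=\|\bH\bc\|^{2}$ and $\|\ba\|^{2}=\|\bc\|^{2}$, so that
\begin{align}
\ba^{T}\bigl(\bI+\Tsnr\,\tilde{\bH}^{T}\tilde{\bH}\bigr)\ba=\bc^{\dagger}\bigl(\bI+\Tsnr\,\bH^{\dagger}\bH\bigr)\bc.\nonumber
\end{align}
Moreover the map $\ba\leftrightarrow\bc$ is a bijection between $\ZZ^{2M}\setminus\mathbf{0}$ and $(\ZZ^{M}+i\ZZ^{M})\setminus\mathbf{0}$, so the minimum over one set equals the minimum over the other. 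Substituting into the real-valued bound produces \eqref{snrbound}.

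There is no real obstacle here; the proof is essentially two short computations glued together. The only point that requires care is making sure that the dual-lattice invocation of Banaszczyk's theorem is set up with the correct matrix: the dual of $\Lambda(\bL^T)$ is $\Lambda((\bL^T)^{T-1})=\Lambda(\bL^{-1})$, whose Gram matrix is precisely $(\bL\bL^{T})^{-1}=\bI+\Tsnr\,\tilde{\bH}^{T}\tilde{\bH}$, which is what makes the argument go through cleanly. A secondary minor check is the real-to-complex translation, where one must confirm that no factor of $2$ appears, which follows from the specific form of \eqref{realrep}.
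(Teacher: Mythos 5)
Your proof is correct and follows essentially the same route as the paper: invoking \eqref{snrtmp}, applying Banaszczyk's transference theorem with $k=K=2M$ to the dual lattice $\Lambda(\bL^{-1})$, unwinding the Cholesky factorization to recover $\bI+\Tsnr\tilde{\bH}^T\tilde{\bH}$, and then translating the real-valued form to the complex form. The only difference is that you spell out the real-to-complex bijection and the norm identities $\|\tilde{\bH}\ba\|^2=\|\bH\bc\|^2$, $\|\ba\|^2=\|\bc\|^2$ explicitly, whereas the paper simply notes that $\bI+\Tsnr\tilde{\bH}^T\tilde{\bH}$ is the real-valued representation of $\bI+\Tsnr\bH^{\dagger}\bH$.
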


\begin{proof}
Let $\tilde{\bH}$ be the real-valued representation of the channel $\bH$, as in~\eqref{realrep}, and let $\bL$ and $\bL^T$ be as in~\eqref{choldef}. From~\eqref{snrtmp} we have
\begin{align}
\Tsnr_{\text{eff}}=\frac{1}{\lambda^2_{2M}(\mathbf{L}^T)}.\nonumber
\end{align}
The dual lattice of $\Lambda(\mathbf{L}^T)$ is $\Lambda(\bL^{-1})$. Thus, Theorem~\ref{thm:dualsuc} gives
\begin{align}
\frac{1}{\lambda^2_{2M}(\mathbf{L}^T)}>\frac{1}{(2M)^2}\lambda_1^2(\bL^{-1}),\nonumber
\end{align}
and therefore
\begin{align}
\Tsnr_{\text{eff}}&>\frac{1}{(2M)^2}\lambda_1^2(\bL^{-1})\nonumber\\
&=\frac{1}{4M^2}\min_{\ba\in\ZZ^{2M}\setminus\mathbf{0}}\|\bL^{-1}\ba\|^2\nonumber\\
&=\frac{1}{4M^2}\min_{\ba\in\ZZ^{2M}\setminus\mathbf{0}}\ba^T(\bL \bL^{T})^{-1}\ba\nonumber\\
&=\frac{1}{4M^2}\min_{\ba\in\ZZ^{2M}\setminus\mathbf{0}}\ba^T\left(\bI+\Tsnr\tilde{\bH}^T\tilde{\bH}\right)\ba.\label{realsnrbound}
\end{align}
where~\eqref{realsnrbound} follows from~\eqref{choldef}. Since the matrix \mbox{$\left(\bI+\Tsnr\tilde{{\bH}}^T\tilde{{\bH}}\right)\in\RR^{2M\times 2M}$} is the real-valued representation of the complex matrix \mbox{$\left(\bI+\Tsnr{{\bH}}^{\dagger}{{\bH}}\right)\in\CC^{M\times M}$},~\eqref{realsnrbound} can be written in complex form as~\eqref{snrbound}.
\end{proof}

\vspace{2mm}

\begin{remark}
It is worth mentioning that the bound~\eqref{snrbound} is tight up to a multiplicative factor of $4M^2$. Namely, it can be easily shown~\cite[VIII.5, Theorem VI]{cassels1997} that for a full-rank matrix $\bG\in\RR^{K\times K}$
\begin{align}
\lambda_K \left(\bG\right)\lambda_{1}\left((\bG^T)^{-1}\right) \geq 1.\nonumber
\end{align}
Now, repeating the same derivation as in the proof of Theorem~\ref{thm:SNReff} with $\bG=\bL^T$ gives
\begin{align}
\Tsnr_{\text{eff}}\leq\min_{\ba\in\ZZ^{M}+i\ZZ^{M}\setminus\mathbf{0}}\ba^{\dagger}\left(\bI+\Tsnr{\bH}^{\dagger}{\bH}\right)\ba.\nonumber
\end{align}

\end{remark}

\subsection{Relation between the effective SNR and the minimum distance for uncoded QAM}
\label{subsec:QAM}

A basic communication scheme for the MIMO channel is transmitting independent uncoded QAM symbols from each antenna. In this case, the error probability strongly depends on the \emph{minimum distance} at the receiver. For a positive integer $L$, we define
\begin{align}
d_{\text{min}}(\bH,L)\triangleq \min_{\ba\in \text{QAM}^M(L)\setminus\mathbf{0}}\|\bH\ba\|,\label{dmindef}
\end{align}
where
\begin{align}
\text{QAM}(L)&\triangleq \ \left\{-L,-L+1,\ldots,L-1,L\right\}\nonumber\\
&+i\left\{-L,-L+1,\ldots,L-1,L\right\},\label{QAML}
\end{align}
and $\text{QAM}^M(L)$ is an $M$-dimensional vector whose components all belong to $\text{QAM}(L)$.
Note that if $L$ is an even integer, $d_{\text{min}}(\bH,L)$ is the minimum distance at the receiver when each antenna transmits symbols from a $\text{QAM}(L/2)$ constellation. This is true since
\begin{align}
\min_{\substack{{\bx_1,\bx_2\in\text{QAM}^M(L/2)}\\{\bx_1\neq\bx_2}}}\|\bH\bx_1-\bH\bx_2\|= \min_{\bx\in \text{QAM}^M(L)\setminus\mathbf{0}}\|\bH\bx\|.\nonumber
\end{align}

\vspace{2mm}

In the IF scheme \emph{there is no assumption} that QAM symbols are transmitted. Rather, each antenna transmits codewords taken from a linear codebook. Nevertheless, we show that the performance of the IF receiver over the channel $\bH$ can be tightly related to those of a \emph{hypothetical} uncoded QAM system over the same channel. See Figure~\ref{fig:STvsPIF}. Namely, $\Tsnr_{\text{eff}}$ is closely related to $d_{\text{min}}(\bH,L)$. This relation is formalized in the next key lemma, which is a simple consequence of Theorem~\ref{thm:SNReff}.

\vspace{2mm}

\begin{lemma}[Relation between $\Tsnr_{\text{eff}}$ and $d_{\text{min}}$]
\label{lem:snrdmin}
Consider the complex MIMO channel \mbox{$\by=\bH\bx+\bz$} with $M$ transmit antennas and $N$ receive antennas, power constraint \mbox{$\mathbb{E}(\bx^{\dagger}\bx)\leq M\cdot\Tsnr$}, and additive noise $\bz$ with i.i.d. circularly symmetric complex Gaussian entries with zero mean and unit variance. The effective signal-to-noise ratio when integer-forcing equalization is applied is lower bounded by
\begin{align}
\Tsnr_{\text{eff}}>\frac{1}{4M^2}\min_{L=1,2,\ldots}\left(L^2+\Tsnr d_{\text{min}}^2(\bH,L) \right),\nonumber
\end{align}
where $d_{\text{min}}^2(\bH,L)$ is defined in~\eqref{dmindef}.
\end{lemma}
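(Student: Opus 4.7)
The plan is to derive the lemma as a fairly direct corollary of Theorem~\ref{thm:SNReff}. Starting from that theorem, the quadratic form inside the minimum expands as
\[
\ba^{\dagger}\!\left(\bI+\Tsnr\,\bH^{\dagger}\bH\right)\ba \;=\; \|\ba\|^2 + \Tsnr\,\|\bH\ba\|^2,
\]
so the task reduces to showing that for every nonzero $\ba\in\ZZ^M+i\ZZ^M$ there exists a positive integer $L$ with $\|\ba\|^2 \ge L^2$ and $\|\bH\ba\|^2 \ge d_{\text{min}}^2(\bH,L)$; the bound in the lemma then follows by minimizing over $L\ge 1$.

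To produce such an $L$ for a given nonzero $\ba$, I would take
\[
L(\ba) \;\triangleq\; \max_{1\le j \le M}\max\{|\Real(a_j)|,\,|\Imag(a_j)|\},
\]
i.e.\ the smallest positive integer such that $\ba\in\text{QAM}^M(L(\ba))$. Since $\ba$ has Gaussian-integer entries and is nonzero, $L(\ba)$ is itself a positive integer. By construction, some entry $a_j$ satisfies either $|\Real(a_j)|=L(\ba)$ or $|\Imag(a_j)|=L(\ba)$, so $|a_j|^2\ge L(\ba)^2$, which gives $\|\ba\|^2\ge L(\ba)^2$. Moreover, because $\ba$ lies in $\text{QAM}^M(L(\ba))\setminus\{\mathbf 0\}$, the definition \eqref{dmindef} immediately yields $\|\bH\ba\|^2 \ge d_{\text{min}}^2(\bH,L(\ba))$.

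Combining these two inequalities,
\[
\|\ba\|^2 + \Tsnr\,\|\bH\ba\|^2 \;\ge\; L(\ba)^2 + \Tsnr\, d_{\text{min}}^2(\bH,L(\ba)) \;\ge\; \min_{L=1,2,\ldots}\!\Bigl(L^2 + \Tsnr\, d_{\text{min}}^2(\bH,L)\Bigr),
\]
and since the right-hand side is independent of $\ba$, minimizing the left-hand side over nonzero $\ba\in\ZZ^M+i\ZZ^M$ preserves the inequality. Plugging back into Theorem~\ref{thm:SNReff} gives exactly the desired lower bound on $\Tsnr_{\text{eff}}$.

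There is no real obstacle here — the argument is essentially a bookkeeping step identifying the right notion of ``size'' of a Gaussian-integer vector ($L_\infty$ on the concatenated real/imaginary parts) that simultaneously lower-bounds $\|\ba\|^2$ and slots $\ba$ into a QAM alphabet realizing $d_{\text{min}}(\bH,L)$. The only point that deserves a brief sentence in the written proof is the remark that $L(\ba)\ge 1$ because $\ba$ is a nonzero Gaussian-integer vector, which is what licenses the restriction $L=1,2,\ldots$ in the final minimization.
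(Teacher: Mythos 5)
Your proposal is correct and is essentially identical to the paper's proof: the paper likewise defines $\rho(\ba)$ as the maximal absolute value of the real and imaginary components of $\ba$ (your $L(\ba)$), rewrites the minimization over nonzero Gaussian-integer $\ba$ as a double minimization over $L$ and over $\ba$ with $\rho(\ba)=L$, and then lower-bounds $\|\ba\|^2$ by $L^2$ and $\|\bH\ba\|^2$ by $d_{\min}^2(\bH,L)$. The only cosmetic difference is that the paper phrases the step as a partition of the minimization domain while you argue pointwise over $\ba$ and then take the outer minimum; the content is the same.
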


\vspace{2mm}

\begin{proof}
The bound from Theorem~\ref{thm:SNReff} can be written as
\begin{align}
\Tsnr_{\text{eff}}>\frac{1}{4M^2}\min_{\ba\in\ZZ^{M}+i\ZZ^{M}\setminus\mathbf{0}}\|\ba\|^2+\Tsnr\|\bH\ba\|^2.\label{tmpsnrbound}
\end{align}
Let
\begin{align}
\rho(\ba)\triangleq \max_{m=1,\ldots,M}\max\left(|a_{m_{\Real}}|,|a_{m_{\Imag}}|\right),\nonumber
\end{align}
i.e., $\rho(\ba)$ is the maximum absolute value of all real and imaginary components of $\ba$. With this notation,~\eqref{tmpsnrbound} is equivalent to
\begin{align}
\Tsnr_{\text{eff}}&>\frac{1}{4M^2}\min_{L=1,2,\ldots} \ \min_{\substack{ {\ba\in\ZZ^{M}+i\ZZ^{M}\setminus\mathbf{0}} \\ {\rho(\ba)=L} }}\|\ba\|^2+\Tsnr\|\bH\ba\|^2\nonumber\\
&\geq \frac{1}{4M^2}\min_{L=1,2,\ldots}\left(L^2+\Tsnr d_{\text{min}}^2(\bH,L)\right),\nonumber
\end{align}
as desired.
\end{proof}

\vspace{2mm}

\begin{remark}
In the transmission scheme described above each antenna transmits an independent stream. Therefore, the bounds from Theorem~\ref{thm:SNReff} and Lemma~\ref{lem:snrdmin} continue to hold true for multiple access (MAC) channels with $M$ users equipped with a single transmit antenna and a receiver equipped with $N$ receive antennas, where the gains from the $m$th transmit antenna to the receiver are given by the $m$th column of $\bH$ and each user is subject to the power constraint \mbox{$\mathbb{E}\left(|x_k|^2\right)\leq\Tsnr$}.
\label{rem:mac}
\end{remark}

\vspace{2mm}

\begin{remark}
For real-valued $N\times M$ MIMO channels \mbox{$\by=\bH\bx+\bz$} with power constraint \mbox{$\mathbb{E}(\bx^{T}\bx)\leq M\cdot\Tsnr$}, and \mbox{$\bz\sim\mathcal{N}(0,\bI)$} the bound from Theorem~\ref{thm:SNReff} becomes
\begin{align}
\Tsnr_{\text{eff}}>\frac{1}{M^2}\min_{\ba\in\ZZ^{M}\setminus\mathbf{0}}\ba^{T}\left(\bI+\Tsnr{\bH}^{T}{\bH}\right)\ba,\nonumber
\end{align}
and the bound from Lemma~\ref{lem:snrdmin} becomes
\begin{align}
\Tsnr_{\text{eff}}>\frac{1}{M^2}\min_{L=1,2,\ldots}\left(L^2+\Tsnr \tilde{d}_{\text{min}}^2(\bH,L) \right),\nonumber
\end{align}
where
\begin{align}
\tilde{d}_{\text{min}}(\bH,L)&\triangleq \min_{\ba\in \text{PAM}^M(L)\setminus\mathbf{0}}\|\bH\ba\|,\nonumber\\
\text{PAM}(L)&\triangleq \ \left\{-L,-L+1,\ldots,L-1,L\right\}.\nonumber
\end{align}
\label{rem:real}
\end{remark}

The bound from Lemma~\ref{lem:snrdmin} and its real-valued counterpart from Remark~\ref{rem:real} exhibit a Diophantine tradeoff, i.e., they depend on how small the norm $\|\bH\ba\|^2$ can be made as a function of the largest component in the integer-valued vector $\ba$. The typical behavior of this minimal norm, is the subject of several results in the metrical theory of Diophantine approximation, see e.g.~\cite{kemble2005groshev,hussain2009metrical,hussain2012metrical}. Using these results we derive the following lemma, which is proved in Appendix~\ref{app:DoFlemma}

\vspace{2mm}

\begin{lemma}[DoF of Integer-Forcing]
For almost all real-valued MIMO channels (w.r.t. Lebesgue measure), IF equalization achieves the optimal number of degrees-of-freedom (DoF), i.e.,
\begin{align}
\lim_{\Tsnr\rightarrow\infty}\frac{R_{\text{IF}}(\Tsnr)}{\nicefrac{1}{2}\log(\Tsnr)}&=M\lim_{\Tsnr\rightarrow\infty}\frac{\nicefrac{1}{2}\log(\Tsnr_{\text{eff}})}{\nicefrac{1}{2}\log(\Tsnr)}\nonumber\\
&=\min(M,N).\nonumber
\end{align}
\label{lem:IF_DoF}
\end{lemma}

\vspace{2mm}

Standard linear equalizers, such as the zero-forcing equalizer, or the MMSE equalizer, fail to achieve the optimal number of DoF when $N<M$ (In fact, when $N<M$, they achieve zero DoF). In light of this fact, our result that IF equalization achieves the full DoF is notable.
As discussed in Remark~\ref{rem:mac}, this result is also applicable for the MIMO-MAC channel. Thus, for almost every real-valued MIMO-MAC channel with $M$ users equipped with a single transmit antenna and a receiver equipped with $N$ receive antennas, each user can achieve $\min(M,N)/M$ DoF using IF equalization. This extends~\cite[Corollary 6]{CoFTransformFull}, which only covered the case of $N=1$.

\section{Precoded Integer-Forcing}
\label{sec:IFST}

The performance of IF equalization over Rayleigh fading channels was studied in~\cite{zneg12IT} and it was shown that when $N\geq M$ the IF equalizer achieves the optimal receive DMT (corresponding to transmission of independent streams from each antenna). However, in order to approach the compound MIMO capacity, transmitting independent streams from each antenna is not sufficient.

Clearly, there are instances of MIMO channels for which the lower bound~\eqref{snrbound} on $\Tsnr_{\text{eff}}$ does not increase with the WI mutual information. For example, consider a channel $\bH$ where one of the $NM$ entries equals $h$ whereas all other gains are zero. For such a channel \mbox{$\Cwi=\log(1+|h|^2\Tsnr)$}, yet $\Tsnr_{\text{eff}}=1$ (and the bound~\eqref{snrbound} only gives \mbox{$\Tsnr_{\text{eff}}>1/(4M^2)$}). Thus, it is evident that IF equalization alone can perform arbitrarily far from $\Cwi$.

This problem can be overcome by transmitting linear combinations of multiple streams from each antenna. More precisely, instead of transmitting $2M$ linearly coded streams, one from the in-phase component and one from the quadrature component of each antenna, over $n$ channel uses, $2MT$ linearly coded streams are precoded by a unitary matrix and transmitted over $nT$ channel uses.

Domanovitz \textit{et al}.~\cite{de12} proposed to combine IF equalization with linear precoding.
The idea is to transform the $N\times M$ complex MIMO channel~\eqref{channelmodel} into an aggregate \mbox{$NT\times MT$} complex MIMO channel and then apply IF equalization to the aggregate channel. The transformation is done using a unitary precoding matrix \mbox{$\bP\in\CC^{MT\times MT}$}. Specifically, let \mbox{$\bar{\bx}\in\CC^{MT\times 1}$} be the input vector to the aggregate channel. This vector is multiplied by $\bP$ to form the vector \mbox{$\bx=\bP\bar{\bx}\in\CC^{MT\times 1}$} which is transmitted over the channel~\eqref{channelmodel} during $T$ consecutive channel uses. Let
\begin{align}
\MH=\bI_T\otimes\bH=\left[
                      \begin{array}{cccc}
                        \bH & 0 & \cdots & 0 \\
                        0 & \bH & \cdots & 0 \\
                        \vdots & \vdots & \ddots & \vdots \\
                        0 & 0 & \cdots & \bH \\
                      \end{array}
                    \right],\label{hconc}
\end{align}
where $\otimes$ denotes the Kronecker product. The output of the aggregate channel is obtained by stacking $T$ consecutive outputs of the channel~\eqref{channelmodel} one below the other and is given by
\begin{align}
\bar{\by}&=\mathcal{H}\bP\bar{\bx}+\bar{\bz}\nonumber\\
&=\bar{\bH}\bar{\bx}+\bar{\bz},\label{uncodedST}
\end{align}
where \mbox{$\bar{\bH}\triangleq\mathcal{H}\bP=(\bI_T\otimes\bH)\bP\in\CC^{NT\times MT}$} is the aggregate channel matrix, and  \mbox{$\bar{\bz}\in\CC^{NT\times 1}$} is a vector of i.i.d. circularly symmetric complex Gaussian entries. See Figure~\ref{fig:precodedIF}.

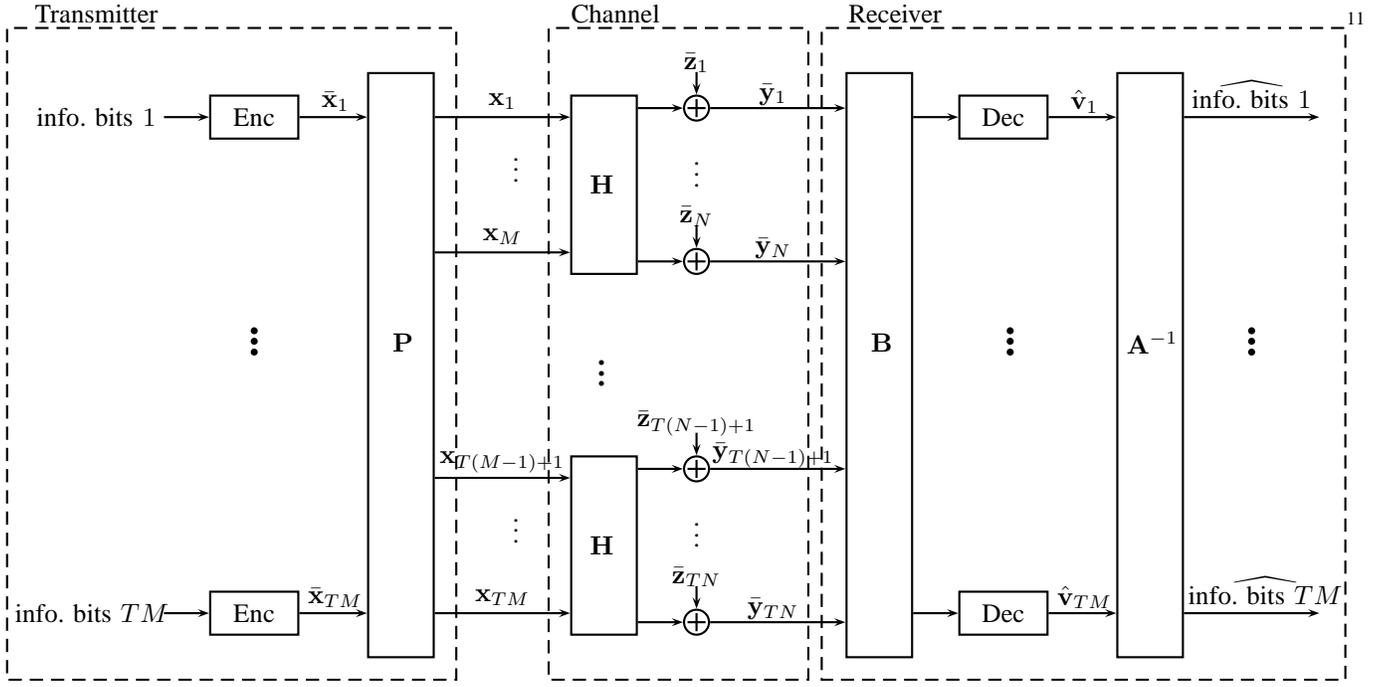
\begin{figure*}[]
\begin{center}
\psset{unit=0.6mm}
\begin{pspicture}(0,-10)(300,120)

\rput(20,0){

\rput(0,133){Transmitter}
\psframe[linestyle=dashed](-20,-15)(80,130)

\rput(115,133){Channel}
\psframe[linestyle=dashed](100,-15)(158,130)

\rput(177,133){Receiver}
\psframe[linestyle=dashed](160.5,-15)(277,130)

\rput(0,110){
\rput(0,0){info. bits $1$} \psline{->}(15,0)(25,0) \psframe(25,-5)(45,5)
\rput(35,0){Enc}\psline{->}(45,0)(60,0) \rput(53,3.5){$\bar{\bx}_1$}
}

\rput(0,60){\rput(35,2){\Huge$\vdots$}}

\rput(0,0){
\rput(-1    ,0){info. bits $TM$} \psline{->}(15,0)(25,0) \psframe(25,-5)(45,5)
\rput(35,0){Enc}\psline{->}(45,0)(60,0) \rput(53,3.5){$\bar{\bx}_{TM}$}
}

\rput(60,0){
\psframe(0,-10)(15,120)\rput(8,60){$\bP$}}

\rput(75,0){
\rput(0,110){
\rput(0,0){\psline{->}(0,0)(30,0) \rput(15,3.5){$\bx_{1}$}}
\rput(18,-10){$\vdots$}
\rput(0,-30){\psline{->}(0,0)(30,0) \rput(15,3.5){$\bx_{M}$}}
\rput(30,0){\psframe(0,-35)(15,5)\rput(7,-15){$\bH$}}
\rput(45,0){\rput(0,2){\psline{->}(0,0)(10,0)}\pscircle(13,2){3}
\psline(11,2)(15,2)
\psline(13,0)(13,4)
\psline{->}(13,10)(13,5)
\rput(13,12){$\bar{\bz}_1$}
\psline{->}(16,2)(46,2)\rput(30,5){$\bar{\by}_{1}$}
}

\rput(58,-11){$\vdots$}

\rput(45,-34){\rput(0,2){\psline{->}(0,0)(10,0)}\pscircle(13,2){3}
\psline(11,2)(15,2)
\psline(13,0)(13,4)
\psline{->}(13,10)(13,5)
\rput(13,12){$\bar{\bz}_N$}
\psline{->}(16,2)(46,2)\rput(30,5){$\bar{\by}_{N}$}
}
}

\rput(37,55){\LARGE$\vdots$}

\rput(0,30){
\rput(0,0){\psline{->}(0,0)(30,0) \rput(15,3.5){$\bx_{T(M-1)+1}$}}
\rput(18,-10){$\vdots$}
\rput(0,-30){\psline{->}(0,0)(30,0) \rput(15,3.5){$\bx_{TM}$}}
\rput(30,0){\psframe(0,-35)(15,5)\rput(7,-15){$\bH$}}
\rput(45,0){\rput(0,2){\psline{->}(0,0)(10,0)}\pscircle(13,2){3}
\psline(11,2)(15,2)
\psline(13,0)(13,4)
\psline{->}(13,10)(13,5)
\rput(13,12){$\bar{\bz}_{T(N-1)+1}$}
\psline{->}(16,2)(46,2)\rput(30,5){$\bar{\by}_{T(N-1)+1}$}
}

\rput(58,-11){$\vdots$}

\rput(45,-34){\rput(0,2){\psline{->}(0,0)(10,0)}\pscircle(13,2){3}
\psline(11,2)(15,2)
\psline(13,0)(13,4)
\psline{->}(13,10)(13,5)
\rput(13,12){$\bar{\bz}_{TN}$}
\psline{->}(16,2)(46,2)\rput(30,5){$\bar{\by}_{TN}$}
}
}

}

\rput(166,0){
\psframe(0,-10)(15,120)\rput(8,60){$\bB$}}

\rput(181,0){
\rput(0,110){
\psline{->}(0,0)(10,0) \psframe(10,-5)(30,5)
\rput(20,0){Dec}\psline{->}(30,0)(45,0) \rput(38,3.5){$\hat{\bv}_1$}
}

\rput(0,60){\rput(21.5,2){\Huge$\vdots$}}

\rput(0,0){
\psline{->}(0,0)(10,0) \psframe(10,-5)(30,5)
\rput(20,0){Dec}\psline{->}(30,0)(45,0) \rput(38,3.5){$\hat{\bv}_{TM}$}
}
}

\rput(226,0){
\psframe(0,-10)(15,120)\rput(8,60){$\bA^{-1}$}}

\rput(241,0){
\rput(0,110){
\psline{->}(0,0)(30,0)
\rput(15,5){$\widehat{\text{info. bits }1}$}
}

\rput(0,60){\rput(15,2){\Huge$\vdots$}}

\rput(0,0){
\psline{->}(0,0)(30,0)
\rput(18,5){$\widehat{\text{info. bits }TM}$}}
}
}
\end{pspicture}
\end{center}
\caption{A schematic overview of precoded integer-forcing. For simplicity, the dithers are not depicted in the figure, and a real-valued channel is assumed. At the transmitter, the information bits are split to $TM$ streams, each of which is encoded by the same linear code. Then, a $TM\times TM$ precoding matrix ``mixes'' the $TM$ codewords into $TM$ linear combinations. The channel $\bH$ is used $T$ times, where in each channel use each of the antennas transmits one of the precoded linear combinations. The receiver treats $T$ consecutive channel outputs as the output of an aggregate $NT\times MT$ channel with transfer matrix \mbox{$\bar{\bH}=(\bI_T\otimes\bH)\bP$}, and applies integer-forcing equalization to the aggregate channel.}
\label{fig:precodedIF}
\end{figure*} 

A remaining major challenge is how to choose the precoding matrix $\bP$ (recall that a compound channel is considered, and hence, the choice of $\bP$ cannot depend on $\bH$). As observed in Section~\ref{subsec:performance}, the performance of the IF equalizer is dictated by $\Tsnr_{\text{eff}}$. Thus, in order to obtain achievable rates that are comparable to the WI mutual information, $\Tsnr_{\text{eff}}$ must scale appropriately with $\Cwi$. The precoding matrix $\bP$ should therefore be chosen so as to guarantee this property for all channel matrices with the same WI mutual information.

Lemma~\ref{lem:snrdmin} indicates that for the aggregate channel $\Tsnr_{\text{eff}}$ is lower bounded by $\min_{L}(L^2+\Tsnr{d}_{\text{min}}(\bar{\bH},L))/4M^2$, where
\begin{align}
{d}_{\text{min}}(\bar{\bH},L)=\min_{\ba\in \text{QAM}^{MT}(L)\setminus\mathbf{0}}\|\mathcal{H}\bP\ba\|.\label{bardmindef}
\end{align}
Thus, the precoding matrix $\bP$ could be chosen so as to guarantee that ${d}_{\text{min}}(\bar{\bH},L)$ scales appropriately with $\Cwi$. This boils down to the problem of designing precoding matrices for transmitting QAM symbols over an unknown MIMO channel with the aim of maximizing the received minimum distance. The latter problem was extensively studied during the past decade, under the framework of linear dispersion space-time coding, and unitary precoding matrices that satisfy the aforementioned criterion were found. Therefore, the same matrices that proved so useful for space-time coding are also useful for precoded integer-forcing. A major difference, however, between the two is that while for linear dispersion space-time coding the precoding matrix $\bP$ is applied to uncoded QAM symbols, in precoded integer-forcing it is applied to \emph{coded streams}. This in turn, yields an achievable rate characterization for the compound MIMO channel which is not available using linear dispersion space-time coding. In particular, very different asymptotics can be analyzed. Rather than fixing the block length and taking $\Tsnr$ to infinity, as usually done in the space-time coding literature, here, we fix the channel and take the \emph{block length} to infinity, as in the traditional information-theoretic framework.

In~\cite{de12} the performance of IF equalization with the Golden code's~\cite{brv05} precoding matrix was numerically evaluated in a $2\times 2$ MIMO Rayleigh fading environment. The scheme's outage probability was found to be relatively close to that achieved by white i.i.d. Gaussian codebooks. Here, we prove that, in fact, precoded IF equalization, where the precoding matrix generates a perfect linear dispersion space-time code, achieves rates within a \emph{constant gap} from the compound MIMO capacity.

The aim of the next section is to lower bound ${d}_{\text{min}}(\bar{\bH},L)$ as a function of $\Cwi$ for precoding matrices $\bP$ that generate perfect linear dispersion space-time codes. This lower bound will be instrumental in proving that precoded IF universally attains the compound MIMO capacity to within a constant gap.
\section{Linear Dispersion Space-Time Codes}
\label{sec:ST}

Before deriving the lower bound on ${d}_{\text{min}}(\bar{\bH},L)$ some necessary background on space-time codes is given.

An $M\times T$ space-time (ST) code $\Cst$ for the channel~\eqref{channelmodel} with rate $R$ is a set of $|\Cst|=2^{RT}$ complex matrices of dimensions $M\times T$. The codebook $\Cst$ has to satisfy the average power constraint\footnote{The Frobenius norm of a matrix $\bX$ is denoted by $\|\bX\|_F^2$.}
\begin{align}
\frac{1}{2^{RT}}\sum_{\bX\in\Cst}\|\bX\|_F^2\leq MT\cdot\Tsnr.\nonumber
\end{align}
When the ST code $\Cst$ is used, a code matrix $\bX\in\Cst$ is transmitted column by column over $T$ consecutive channel uses, such that the $T$ channel outputs can be expressed as
\begin{align}
\bY=\bH\bX+\bZ, \nonumber
\end{align}
where each column of the matrices $\bY,\bZ\in\CC^{N\times T}$ represents the channel output and additive noise, respectively, at one of the $T$ channel uses.

An ST code $\Cst$ is said to be a \emph{linear dispersion} ST code~\cite{hh02} over the constellation $\mathcal{S}$ if every code matrix $\bX\in\Cst$ can be uniquely decomposed as
\begin{align}
\bX=\sum_{k=1}^K s_k \bF_k, \ s_k\in\mathcal{S},\nonumber
\end{align}
where $\mathcal{S}$ is some constellation and the matrices $\bF_k\in\CC^{M\times T}$ are fixed and independent of the constellation symbols $s_k$. Denoting by $\vect(\bX)$ the vector obtained by
stacking the columns of $\bX$ one below the other, and letting $\bs=[s_1 \ \cdots \ s_K]^T$ gives
\begin{align}
\vect(\bX)=\bP \bs,\nonumber
\end{align}
where
\begin{align}
\bP=[\vect(\bF_1) \ \vect(\bF_2) \ \cdots \ \vect(\bF_K)]\nonumber
\end{align}
is the code's $MT\times K$ \emph{generating matrix}. A linear dispersion ST code is \emph{full-rate} if $K=MT$.
In the sequel, linear dispersion ST codes over a $\text{QAM}(L)$ constellation, defined in~\eqref{QAML}, will play a key role. The linear dispersion ST code obtained by using the infinite constellation $\text{QAM}(\infty)=\ZZ+i\ZZ$ is referred to as $\Cst_{\infty}$, and, after vectorization, is in fact a complex lattice with generating matrix $\bP$. Since the $\text{QAM}(L)$ constellation is a subset of $\ZZ+i\ZZ$ it follows that for any finite $L$ the $\text{QAM}(L)$ based code $\Cst$ is a subset of $\Cst_{\infty}$.

An important class of linear dispersion ST codes with $T=M$ is that of \emph{perfect codes}~\cite{orbv06,esk07}, which is defined next.

\vspace{2mm}

\begin{definition}
An $M\times M$ linear dispersion ST code over a $\text{QAM}$ constellation is called \emph{perfect} if
\begin{enumerate}
\item It is full-rate;
\item It satisfies the nonvanishing determinant criterion
\begin{align}
\delta_{\text{min}}(\Cst_{\infty})\triangleq\inf_{\mathbf{0}\neq\bX\in\Cst_{\infty}}|\det(\bX)|^2>0;\nonumber
\end{align}
\item The code's generating matrix is unitary, i.e., \mbox{$\bP^{\dagger}\bP=\bI$}.\label{unitary}
\end{enumerate}
\label{def:perfectcodes}
\end{definition}

Note that this definition is slightly different than the one used in~\cite{orbv06,esk07}, where instead of condition~\ref{unitary} it is required that the energy of the codeword corresponding to the information symbols $\bs$ will have the same energy as $\|\bs\|^2$, and that all the coded symbols in all $T$ time slots will have the same average energy.

In~\cite{orbv06}, perfect linear dispersion ST codes were found for $M=2,3,4$ and $6$, whereas in~\cite{esk07} perfect linear dispersion ST codes were obtained for any positive integer $M$. The constructions in~\cite{orbv06,esk07} are based on cyclic division algebras, and result in unitary generating matrices. Thus, for any positive integer $M$, there exist codes that satisfy the requirements of Definition~\ref{def:perfectcodes}.


\vspace{2mm}

The approximate universality of an ST code over the MIMO channel was studied in~\cite{tv06}. This property refers to an ST code being optimal in terms of DMT regardless of the fading statistics of $\bH$. A sufficient and necessary condition for an ST code to be approximately universal was derived in~\cite{tv06}. This condition is closely related to the nonvanishing determinant criterion and is satisfied by perfect linear dispersion ST codes. The next Theorem is an extension of~\cite[Theorem 3.1]{tv06}. The notation $[x]^+\triangleq\max(x,0)$ is used.

\vspace{2mm}

\begin{theorem}
\label{thm:normbound}
Let $\Cst_{\infty}$ be an $M\times M$ perfect linear dispersion ST code over a $\text{QAM}(\infty)$ constellation with $\delta_{\text{min}}(\Cst_{\infty})=\inf_{\mathbf{0}\neq\bX\in\Cst_{\infty}}|\det(\bX)|^2>0$, and let $\Cst$ be its subcode over a $\text{QAM}(L)$ constellation. Then, for all channel matrices $\bH$ with corresponding WI mutual information \mbox{$\Cwi=\log\det(\bI+\Tsnr\bH^{\dagger}\bH)$}, $M$ transmit antennas and an arbitrary number of receive antennas and all $\mathbf{0}\neq\bX\in\Cst$
\begin{align}
\Tsnr\|\bH\bX\|_F^2\geq \left[\delta_{\text{min}}(\Cst_{\infty})^{\frac{1}{M}}2^{\frac{\Cwi}{M}}-2M^2L^2\right]^+.\nonumber
\end{align}
\end{theorem}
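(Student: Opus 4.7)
The plan is to relate $\Tsnr\|\bH\bX\|_F^2$ to $2^{\Cwi}=\det(\bI+\Tsnr\bH^{\dagger}\bH)$ via an arithmetic-geometric mean inequality applied to the matrix $\bX^{\dagger}(\bI+\Tsnr\bH^{\dagger}\bH)\bX$, and then to exploit the nonvanishing determinant of the perfect code together with the unitarity of its generating matrix $\bP$.

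First I would write
\begin{align}
\|\bX\|_F^2+\Tsnr\|\bH\bX\|_F^2 &= \trace\left(\bX^{\dagger}\left(\bI+\Tsnr\bH^{\dagger}\bH\right)\bX\right).\nonumber
\end{align}
Since the matrix on the right-hand side is $M\times M$ and positive semidefinite, AM-GM applied to its eigenvalues yields $\trace(\cdot)\geq M\det(\cdot)^{1/M}$. Combining this with the multiplicativity of the determinant, I obtain
\begin{align}
\|\bX\|_F^2+\Tsnr\|\bH\bX\|_F^2 \geq M\,|\det(\bX)|^{2/M}\det\left(\bI+\Tsnr\bH^{\dagger}\bH\right)^{1/M}.\nonumber
\end{align}
Using $\det(\bI+\Tsnr\bH^{\dagger}\bH)=2^{\Cwi}$ and the perfect-code property $|\det(\bX)|^2\geq \delta_{\text{min}}(\Cst_{\infty})$ for every $\mathbf{0}\neq\bX\in\Cst_{\infty}\supseteq\Cst$, this becomes
\begin{align}
\|\bX\|_F^2+\Tsnr\|\bH\bX\|_F^2 \geq M\,\delta_{\text{min}}(\Cst_{\infty})^{1/M}\,2^{\Cwi/M} \geq \delta_{\text{min}}(\Cst_{\infty})^{1/M}\,2^{\Cwi/M}.\nonumber
\end{align}

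Next I would upper bound $\|\bX\|_F^2$. Because $\Cst$ is full-rate and its generating matrix $\bP$ is unitary, $\vect(\bX)=\bP\bs$ implies $\|\bX\|_F^2=\|\bs\|^2$. For $\bs\in\text{QAM}(L)^{M^2}$ each component satisfies $|s_k|^2\leq 2L^2$, so summing over the $M^2$ components gives $\|\bX\|_F^2\leq 2M^2L^2$. Substituting and rearranging yields
\begin{align}
\Tsnr\|\bH\bX\|_F^2 \geq \delta_{\text{min}}(\Cst_{\infty})^{1/M}\,2^{\Cwi/M}-2M^2L^2,\nonumber
\end{align}
and since the left-hand side is nonnegative the $[\cdot]^+$ can be inserted on the right, giving the claim.

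The steps are essentially routine once the identity in the first display is written down; the only real choice is the AM-GM step, which is what allows one to convert a multiplicative bound (the nonvanishing determinant) into the additive Frobenius-norm bound needed. The mild subtlety is ensuring that AM-GM is applied to the \emph{regularized} Gram matrix $\bI+\Tsnr\bH^{\dagger}\bH$ rather than to $\bH^{\dagger}\bH$ itself; otherwise the bound would degenerate whenever $\bH$ is rank-deficient, whereas passing through $\bI+\Tsnr\bH^{\dagger}\bH$ keeps the determinant strictly positive and connects directly to $2^{\Cwi}$.
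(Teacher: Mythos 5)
Your proof is correct, and it takes a genuinely different and considerably more economical route than the paper's. The paper follows Tavildar and Vishwanath's argument: it takes SVDs of $\bH$ and $\bX$, argues (by citation) that the worst channel rotation aligns the weak channel singular values with the strong codeword singular values, reduces the problem to a constrained minimization over the channel singular values, and solves it by water-filling, then verifies that the water-filling $[\cdot]^+$ clipping is inactive under the assumption $2M^2L^2 \le \delta_{\text{min}}^{1/M}2^{\Cwi/M}$. You obtain exactly the same intermediate inequality $\Tsnr\|\bH\bX\|_F^2 + \|\bX\|_F^2 \ge M|\det(\bX)|^{2/M}\det(\bI+\Tsnr\bH^{\dagger}\bH)^{1/M}$ in a single step from $\trace(\bA) \ge M\det(\bA)^{1/M}$ applied to the positive semidefinite matrix $\bA = \bX^{\dagger}(\bI+\Tsnr\bH^{\dagger}\bH)\bX$, bypassing SVD, the worst-case alignment lemma, the Lagrangian, and the case analysis on whether the water level exceeds all $|\lambda_m|^2$ (your final $[\cdot]^+$ absorbs that for free, since the left side is nonnegative). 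The remaining steps ($|\det\bX|^2 \ge \delta_{\text{min}}$, $\|\bX\|_F^2 = \|\bP\bs\|^2 = \|\bs\|^2 \le 2M^2L^2$ by unitarity of $\bP$, and dropping the harmless factor $M\ge 1$) are identical in both proofs. The trace-determinant argument is shorter, self-contained, and avoids importing the worst-case-rotation claim from an external reference, so your version is arguably preferable.
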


\vspace{2mm}

\begin{proof}
The proof closely follows that of~\cite[Theorem 3.1]{tv06}, and is given in Appendix~\ref{app:proofnormbound}.
\end{proof}

\vspace{2mm}

Let $\MH=\bI_M\otimes\bH$, as in~\eqref{hconc}.
The next simple corollary of Theorem~\ref{thm:normbound} will be used in Section~\ref{sec:main} to prove the main result of this paper.

\vspace{2mm}

\begin{corollary}
\label{cor:QAM}
Let $\bP\in\CC^{M^2\times M^2}$ be a generating matrix of a perfect \mbox{$M\times M$} QAM based linear dispersion ST code $\Cst_\infty$ with \mbox{$\delta_{\text{min}}(\Cst_{\infty})=\inf_{\mathbf{0}\neq\bX\in\Cst_{\infty}}|\det(\bX)|^2>0$}. Then, for all channel matrices $\bH$ with corresponding WI mutual information \mbox{$\Cwi=\log\det(\bI+\Tsnr\bH^{\dagger}\bH)$}, $M$ transmit antennas and an arbitrary number of receive antennas
\begin{align}
\Tsnr d^2_{\min}(\MH\bP,L)\geq \left[\delta_{\text{min}}(\Cst_{\infty})^{\frac{1}{M}}2^{\frac{\Cwi}{M}}-2M^2L^2\right]^+.\nonumber
\end{align}
\end{corollary}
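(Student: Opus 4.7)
The plan is to recognize that the corollary follows almost immediately from Theorem~\ref{thm:normbound} once we translate the quantity $d_{\min}(\MH\bP,L)$, which is a norm of a vector-valued expression, into the Frobenius norm of a matrix-valued expression $\|\bH\bX\|_F$ that Theorem~\ref{thm:normbound} directly controls. The bridge is the vectorization identity and the Kronecker structure of $\MH$.

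First, I would unpack the definition. Since $\bP \in \CC^{M^2\times M^2}$ is the generating matrix of an $M\times M$ perfect linear dispersion ST code, every vector $\bP\ba$ with $\ba\in\text{QAM}^{M^2}(L)$ can be written as $\vect(\bX)$ for a unique codeword $\bX\in\Cst$ in the $\text{QAM}(L)$-based subcode, and conversely every $\bX\in\Cst$ arises this way. Using the standard identity $(\bI_M\otimes\bH)\vect(\bX)=\vect(\bH\bX)$ together with $\MH=\bI_M\otimes\bH$, I get
\begin{align}
\|\MH\bP\ba\|^2 \;=\; \|\vect(\bH\bX)\|^2 \;=\; \|\bH\bX\|_F^2. \nonumber
\end{align}
The bijection between nonzero $\ba\in\text{QAM}^{M^2}(L)$ and nonzero $\bX\in\Cst$ then yields
\begin{align}
d^2_{\min}(\MH\bP,L) \;=\; \min_{\mathbf{0}\neq\bX\in\Cst}\|\bH\bX\|_F^2. \nonumber
\end{align}

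Second, I would multiply both sides by $\Tsnr$ and invoke Theorem~\ref{thm:normbound} pointwise. For each $\mathbf{0}\neq\bX\in\Cst$, Theorem~\ref{thm:normbound} guarantees $\Tsnr\|\bH\bX\|_F^2 \geq [\delta_{\min}(\Cst_\infty)^{1/M}2^{\Cwi/M}-2M^2L^2]^+$. Since the right-hand side depends only on $\Tsnr$, $\bH$, $L$, and the code (not on the particular codeword $\bX$), taking the minimum over $\mathbf{0}\neq\bX\in\Cst$ on the left preserves the inequality and delivers exactly the claimed bound on $\Tsnr d^2_{\min}(\MH\bP,L)$.

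There is no real obstacle here; the only subtlety is to make sure that the vectorization identity is correctly applied given the block-diagonal form $\MH=\bI_M\otimes\bH$ used in this paper, and to note that $T=M$ for the perfect code so that the Kronecker dimension matches that of $\bP$. Everything else is a direct quotation of Theorem~\ref{thm:normbound}, which itself is where the substantive work (the cyclic-algebra determinant bound together with the Hadamard/AM-GM argument) resides.
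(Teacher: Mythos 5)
Your proof is correct and follows essentially the same route as the paper: the bijection between $\ba\in\text{QAM}^{M^2}(L)$ and codewords $\bX\in\Cst$ via $\vect(\bX)=\bP\ba$, the vectorization identity $(\bI_M\otimes\bH)\vect(\bX)=\vect(\bH\bX)$ giving $\|\MH\bP\ba\|^2=\|\bH\bX\|_F^2$, and a pointwise invocation of Theorem~\ref{thm:normbound} followed by minimizing over $\ba$. The only cosmetic difference is that you emphasize the Kronecker/vec identity explicitly and note $T=M$; the substance is identical to the paper's argument.
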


\begin{proof}
Consider the subcode $\Cst$ of $\Cst_\infty$, defined over a $\text{QAM}(L)$ constellation. Then, for any \mbox{$\ba\in\text{QAM}^{M^2}(L)$} there exist a code matrix $\bX\in\Cst$ such that
\begin{align}
\vect(\bX)=\bP\ba.\nonumber
\end{align}
Now,
\begin{align}
\Tsnr\|\MH\bP\ba\|^2&=\Tsnr\|\MH\vect(\bX)\|^2\nonumber\\
&=\Tsnr\|\bH\bX\|^2_F\nonumber\\
&\geq \left[\delta_{\text{min}}(\Cst_{\infty})^{\frac{1}{M}}2^{\frac{\Cwi}{M}}-2M^2L^2\right]^+,\nonumber
\end{align}
where the last inequality follows from Theorem~\ref{thm:normbound}. It follows that
\begin{align}
\Tsnr d_{\text{min}}^2(\MH\bP,L)&=\min_{\ba\in \text{QAM}^{M^2}(L)\setminus\mathbf{0}}\Tsnr\|\MH\bP\ba\|^2\nonumber\\
&\geq \left[\delta_{\text{min}}(\Cst_{\infty})^{\frac{1}{M}}2^{\frac{\Cwi}{M}}-2M^2L^2\right]^+.\nonumber
\end{align}

\end{proof}

\section{Main Result}
\label{sec:main}

The next theorem lower bounds the effective signal-to-noise ratio of precoded IF equalization, where the precoding matrix generates a perfect linear dispersion ST code. The obtained bound depends on the channel matrix $\bH$ \emph{only} through its corresponding WI mutual information.

\begin{theorem}
\label{thm:snrbound}
Consider the aggregate MIMO channel
\begin{align}
\bar{\by}&=\mathcal{H}\bP\bar{\bx}+\bar{\bz}\nonumber
\end{align}
where $\MH=\bI_M\otimes\bH\in\CC^{NM\times M^2}$, and $\bP\in\CC^{M^2\times M^2}$ is a generating matrix of a perfect \mbox{$M\times M$} QAM based linear dispersion ST code $\Cst_\infty$ with \mbox{$\delta_{\text{min}}(\Cst_{\infty})=\inf_{\mathbf{0}\neq\bX\in\Cst_{\infty}}|\det(\bX)|^2>0$}. Then, applying IF equalization to the aggregate channel yields
\begin{align}
\Tsnr_{\text{eff}}>\frac{1}{8 M^6}\delta_{\text{min}}(\Cst_{\infty})^{\frac{1}{M}}2^{\frac{\Cwi}{M}},\nonumber
\end{align}
for all channel matrices $\bH$ with corresponding WI mutual information \mbox{$\Cwi=\log\det(\bI+\Tsnr\bH^{\dagger}\bH)$}, $M$ transmit antennas and an arbitrary number of receive antennas
\end{theorem}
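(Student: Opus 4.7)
The plan is to combine Lemma~\ref{lem:snrdmin} and Corollary~\ref{cor:QAM} in a straightforward way, with the only real work being a short one-parameter optimization over $L$. Since the precoding matrix $\bP$ is $M^2\times M^2$ and $\MH=\bI_M\otimes\bH$, the aggregate channel $\bar{\by}=\MH\bP\bar{\bx}+\bar{\bz}$ has $M^2$ transmit streams and $NM$ receive antennas (i.e.\ $T=M$). Since $\bP$ is unitary, the aggregate input $\bar{\bx}$ satisfies a total power constraint of $M^2\cdot\Tsnr$, consistent with the per-use constraint of the original channel.

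First I would apply Lemma~\ref{lem:snrdmin} to the aggregate channel, taking care to substitute $M^2$ wherever the lemma refers to the number of transmit antennas. This yields
\begin{align}
\Tsnr_{\text{eff}}>\frac{1}{4M^4}\min_{L=1,2,\ldots}\left(L^2+\Tsnr d^2_{\min}(\MH\bP,L)\right).\nonumber
\end{align}
Next I would invoke Corollary~\ref{cor:QAM}, which says $\Tsnr d^2_{\min}(\MH\bP,L)\geq [X-2M^2 L^2]^+$ with $X\triangleq\delta_{\min}(\Cst_\infty)^{1/M}2^{\Cwi/M}$, giving
\begin{align}
\Tsnr_{\text{eff}}>\frac{1}{4M^4}\min_{L\geq 1}\left(L^2+[X-2M^2 L^2]^+\right).\nonumber
\end{align}

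The remaining step is to lower-bound the inner minimum by $X/(2M^2)$, which splits into two cases on $L$. If $L^2\geq X/(2M^2)$, the positive part vanishes and the expression is already at least $X/(2M^2)$. If $L^2 < X/(2M^2)$, the expression equals $X-(2M^2-1)L^2>X-(2M^2-1)\cdot X/(2M^2)=X/(2M^2)$. In either case the minimum is strictly greater than $X/(2M^2)$, so multiplying by $1/(4M^4)$ yields the claimed bound $\Tsnr_{\text{eff}}>X/(8M^6)$.

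There is no serious obstacle: the nontrivial content was already pushed into Theorem~\ref{thm:normbound} and its corollary, so the present theorem is a one-line assembly of those estimates together with the general effective-SNR bound of Lemma~\ref{lem:snrdmin}. The only mild care required is (i) correctly identifying the aggregate dimensions (so that the prefactor becomes $1/(4(M^2)^2)=1/(4M^4)$ rather than $1/(4M^2)$), and (ii) performing the short case analysis above to dispose of the optimization over $L$ in closed form.
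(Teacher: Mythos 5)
Your proposal is correct and follows essentially the same route as the paper: apply Lemma~\ref{lem:snrdmin} to the $NM\times M^2$ aggregate channel (giving the $1/(4M^4)$ prefactor), invoke Corollary~\ref{cor:QAM}, and then dispose of the minimization over $L$. The only cosmetic difference is that you handle the optimization over $L$ by an explicit two-case split, whereas the paper first weakens $\left[X-2M^2L^2\right]^+$ to $\left[X/(2M^2)-L^2\right]^+$ and then reads off the bound; the two arguments are equivalent. (One trivially minor nit: in your first case the expression is $\geq X/(2M^2)$ rather than strictly greater, but the overall strict inequality is preserved by Lemma~\ref{lem:snrdmin}.)
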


\vspace{2mm}

\begin{proof}
Applying Lemma~\ref{lem:snrdmin} to the aggregate $NM\times M^2$ channel  matrix $\bar{\bH}={\mathcal{H}\bP}$ gives
\begin{align}
\Tsnr_{\text{eff}}&>\frac{1}{4M^4}\min_{L=1,2,\ldots}\left(L^2+\Tsnr d_{\text{min}}^2(\bar{\bH},L)\right).\label{aggdmin}
\end{align}
Using Corollary~\ref{cor:QAM}, this is bounded by
\begin{align}
\Tsnr_{\text{eff}}&>\frac{1}{4M^4}\min_{L=1,2,\ldots}\left(L^2+\left[\delta_{\text{min}}(\Cst_{\infty})^{\frac{1}{M}}2^{\frac{\Cwi}{M}}-2M^2L^2\right]^+ \right)\nonumber\\
&\geq\frac{1}{4M^4}\min_{L=1,2,\ldots}\left(L^2+\left[\frac{\delta_{\text{min}}(\Cst_{\infty})^{\frac{1}{M}}2^{\frac{\Cwi}{M}}}{2M^2}-L^2\right]^+ \right)\nonumber\\
&\geq\frac{1}{8M^6}\delta_{\text{min}}(\Cst_{\infty})^{\frac{1}{M}}2^{\frac{\Cwi}{M}}\nonumber
\end{align}
as desired.
\end{proof}

\vspace{2mm}

The next theorem shows that precoded IF attains the compound MIMO capacity to within a constant gap.

\vspace{2mm}

\begin{theorem}
Let $\bP\in\CC^{M^2\times M^2}$ be a generating matrix of a perfect \mbox{$M\times M$} QAM based linear dispersion ST code $\Cst_\infty$ with \mbox{$\delta_{\text{min}}(\Cst_{\infty})=\inf_{\mathbf{0}\neq\bX\in\Cst_{\infty}}|\det(\bX)|^2>0$}.
For all channel matrices $\bH$ with $M$ transmit antennas and an arbitrary number of receive antennas, precoded integer-forcing with the precoding matrix $\bP$ achieves any rate satisfying
\begin{align}
R_{\text{P-IF}}<\Cwi-\Gamma\left(\delta_{\text{min}}(\Cst_{\infty}),M\right),\nonumber
\end{align}
where \mbox{$\Cwi=\log\det(\bI+\Tsnr\bH^{\dagger}\bH)$}, and
\begin{align}
\Gamma\left(\delta_{\text{min}}(\Cst_{\infty}),M\right)\triangleq \log\frac{1}{\delta_{\text{min}}(\Cst_{\infty})}+3M\log(2M^2). \label{Gammadef}
\end{align}
\label{thm:constgap}
\end{theorem}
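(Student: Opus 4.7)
The proof is essentially a combination of the effective SNR lower bound from Theorem~\ref{thm:snrbound} with the achievable rate analysis for integer-forcing equalization from Section~\ref{subsec:performance}. The main conceptual work has already been done; what remains is bookkeeping that accounts for the space-time expansion factor $T=M$.

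The plan is as follows. First, I would note that precoded integer-forcing, as described in Section~\ref{sec:IFST}, is by construction nothing other than standard IF equalization applied to the aggregate channel $\bar{\bH}=\mathcal{H}\bP \in \CC^{NM\times M^2}$, where $\mathcal{H}=\bI_M\otimes\bH$ arises from stacking $T=M$ consecutive uses of the original channel. Thus I may invoke directly the achievability results of Section~\ref{subsec:performance}: assuming a good nested lattice codebook is used on each of the $M^2$ complex streams, equation~\eqref{Rif} applied to the aggregate $M^2$-input channel guarantees that all streams can be decoded reliably provided the \emph{aggregate} rate satisfies $R_{\text{agg}} < M^2 \log(\Tsnr_{\text{eff}})$, where $\Tsnr_{\text{eff}}$ is the effective SNR of the aggregate channel.

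Next, I would translate this into a rate per original channel use. Since each aggregate channel use corresponds to $T=M$ uses of the underlying MIMO channel $\bH$, the achievable rate per original channel use is $R_{\text{P-IF}} = R_{\text{agg}}/T < M \log(\Tsnr_{\text{eff}})$. Now I would plug in the bound from Theorem~\ref{thm:snrbound}, namely $\Tsnr_{\text{eff}} > \tfrac{1}{8M^6}\,\delta_{\text{min}}(\Cst_\infty)^{1/M}\, 2^{\Cwi/M}$, to obtain
\begin{align}
R_{\text{P-IF}} &< M \log\!\left(\tfrac{1}{8M^6}\,\delta_{\text{min}}(\Cst_\infty)^{1/M}\, 2^{\Cwi/M}\right) \nonumber \\
&= \Cwi - \log\tfrac{1}{\delta_{\text{min}}(\Cst_\infty)} - M\log(8M^6). \nonumber
\end{align}
Recognizing that $8M^6=(2M^2)^3$, the subtracted quantity equals exactly $\Gamma(\delta_{\text{min}}(\Cst_\infty),M)$ as defined in~\eqref{Gammadef}, completing the claim.

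There is no real obstacle here: every nontrivial ingredient is already in place. The only thing to be slightly careful about is the normalization by $T=M$ when converting between the aggregate rate predicted by~\eqref{Rif} (which assumed $M$ transmit antennas over $n$ uses) and the true rate per use of the physical channel, as well as the identification $(2M^2)^3=8M^6$ that makes the constant match the stated $\Gamma$. If one preferred to use $q$-ary linear codes without shaping in place of good nested lattice codes, an additional $M\log(2\pi e/12)$ term would appear, per~\eqref{Rifqary}, enlarging the constant gap but preserving the structure of the argument.
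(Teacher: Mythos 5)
Your proposal is correct and follows the paper's proof almost verbatim: both reduce precoded IF to standard IF on the aggregate $NM\times M^2$ channel, apply the rate formula~\eqref{Rif}, substitute the bound from Theorem~\ref{thm:snrbound}, normalize by $T=M$, and identify $8M^6=(2M^2)^3$ to match $\Gamma$ in~\eqref{Gammadef}. The only cosmetic difference is that you normalize by $M$ before substituting the effective-SNR bound whereas the paper substitutes first, which is immaterial.
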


\vspace{1mm}

\begin{proof}
In precoded IF, the matrix $\bP$ is used as a precoding matrix that transforms the original $N\times M$ MIMO channel~\eqref{channelmodel} to the aggregate $NM\times M^2$ MIMO channel
\begin{align}
\bar{\by}&=\mathcal{H}\bP\bar{\bx}+\bar{\bz}\nonumber\\
&=\bar{\bH}\bar{\bx}+\bar{\bz},\label{uncodedST1}
\end{align}
as described in Section~\ref{sec:IFST}, and then IF equalization is applied to the aggregate channel.  Assuming a ``good'' nested lattice codebook is used to encode all $2M^2$ streams transmitted over the aggregate channel, by~\eqref{Rif}, IF equalization can achieve any rate satisfying
\begin{align}
R_{\text{IF,aggregate}}<M^2\log(\Tsnr_{\text{eff}}).\nonumber
\end{align}
Using Theorem~\ref{thm:snrbound}, it follows that any rate satisfying
\begin{align}
R_{\text{IF,aggregate}}&<M^2\log\left(\frac{1}{8M^6}\delta_{\text{min}}(\Cst_{\infty})^{\frac{1}{M}}2^{\frac{\Cwi}{M}}\right)\nonumber\\
&=M\Cwi-M\log\frac{1}{\delta_{\text{min}}(\Cst_{\infty})}-M^2\log(8M^6)\nonumber
\end{align}
is achievable over the aggregate channel.

Since each channel use of the aggregate channel~\eqref{uncodedST1} corresponds to $M$ channel uses of the original channel~\eqref{channelmodel}, the communication rate should be normalized by a factor of $1/M$. Thus, \mbox{$R_{\text{P-IF}}=R_{\text{IF,aggregate}}/M$}, and the theorem follows.
\end{proof}

%

\vspace{1mm}

\begin{example}
The Golden-code~\cite{brv05} is a QAM-based perfect $2\times2$ linear dispersion space time code, with \mbox{$\delta_{\text{min}}(\Cst_{\infty})=1/5$}. Thus, for a MIMO channel with $M=2$ transmit antennas, its generating matrix $\bP\in\CC^{4\times 4}$ can be used for precoded integer-forcing. Theorem~\ref{thm:constgap} implies that with this choice of $\bP$, precoded integer-forcing achieves $\Cwi$ to within a gap of \mbox{$\Gamma\left(1/5,2\right)=20.32$} bits, which translates to a gap of $5.08$ bits per real dimension. In fact, using a slightly more careful analysis,\footnote{Namely, the product of successive minima of a lattice and its dual lattice in Theorem~\ref{thm:dualsuc} can be bounded using Proposition 3.3 from~\cite{lls90} instead of the result from~\cite{Banaszczyk93}. The bound from~\cite{lls90} involves Hermite's constant and gives better results than those obtained using~\cite{Banaszczyk93} only when very small values of $M$ are of interest.} it can be shown that, with this choice of $\bP$, precoded integer-forcing achieves $\Cwi$ to within $15.24$ bits, i.e., $3.81$ bits per real dimension.
\label{ex:specificgaps}
\end{example}

\vspace{2mm}

While the constants from Example~\ref{ex:specificgaps} may seem quite large, one has to keep in mind that this is a worst-case bound, whereas for the typical case, under common statistical assumptions such as Rayleigh fading, the gap-to-capacity obtained by precoded IF is considerably smaller, as demonstrated in Figure~\ref{fig:GapHist}.

Moreover, the recent work of Fischler \emph{et al.}~\cite{fe14} demonstrates that for channels with a special structure, the gap can be much smaller when precoded IF-SIC~\cite{oen13} is used. In particular,~\cite{fe14} studies the compound \emph{parallel} MIMO channel and finds that for channels of dimensions $2\times 2$ and $3\times 3$ precoded IF-SIC achieves at least $94\%$ and $82\%$, respectively, of the compound capacity \emph{for any value of capacity}. Theorem~\ref{thm:constgap} provides an additive bound on the gap-to-capacity, and therefore guarantees that the fraction of the compound MIMO capacity achieved by precoded IF approaches $100\%$ as the compound capacity increases. It does not, however, provide useful efficiency guarantees, i.e. multiplicative bounds, for small capacities. The results in~\cite{fe14} indicate that with a slightly more complex scheme that also incorporates successive interference cancelation, and a more limited channel model (parallel MIMO channel instead of the general MIMO channel studied here), excellent performance can be guaranteed also for low capacities.

\begin{figure}[htb]
\includegraphics[width=1 \columnwidth]{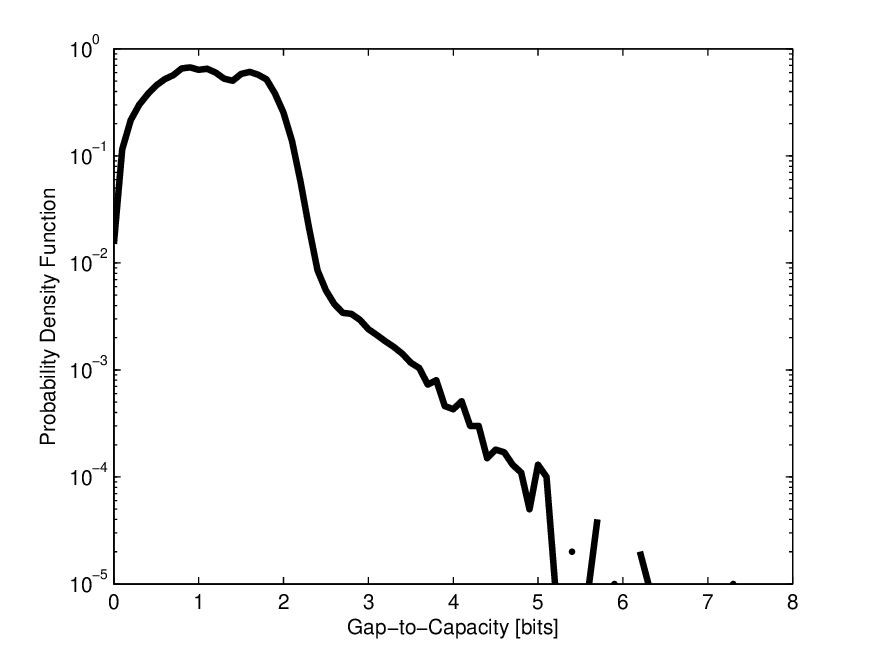}
\caption{The probability density function of the gap-to-capacity achieved by precoded IF with ``good'' nested lattices over a $2\times 2$ MIMO channel with Rayleigh fading, where after drawing $\bH$ it is scaled such that $\log\det\left|\bI+\Tsnr\bH^{\dagger}\bH\right|=30$bits. The precoded matrix that was used is the generating matrix of the Golden code. The probability that precoded IF achieves less than $90\%$ of capacity is smaller than $0.0015$ in this scenario.}
\label{fig:GapHist}
\end{figure}

\vspace{2mm}

Note that although the proof of Theorem~\ref{thm:constgap} assumed that a ``good'' nested lattice code was used, a similar result holds when a $q$-ary linear code without shaping is used. This follows from the fact that the performance of the latter is only degraded by no more than the shaping loss of $\log(2\pi e/12)$ bits per antenna w.r.t. the former. Moreover, Theorem~\ref{thm:snrbound} can also be used to obtain an upper bound on the error probability of precoded IF with uncoded transmission.

\vspace{1mm}

\begin{proposition}
\label{prop:uncoded}
For all channel matrices $\bH$ with corresponding WI mutual information \mbox{$\Cwi=\log\det(\bI+\Tsnr\bH^{\dagger}\bH)$}, $M$ transmit antennas and an arbitrary number of receive antennas, the error probability of precoded IF with uncoded transmission is bounded by
\begin{align}
P_{e,\text{P-IF-uncoded}}\leq 4M^2\exp\left\{-\frac{3}{2}2^{\frac{1}{M}\left(\Cwi-R_{\text{P-IF}}-\Gamma\left(\delta_{\text{min}}(\Cst_{\infty}),M\right)\right)} \right\},\nonumber
\end{align}
provided that the precoding matrix $\bP$ generates an \mbox{$M\times M$} perfect linear dispersion ST code $\Cst_{\infty}$ with minimum determinant $\delta_{\text{min}}(\Cst_{\infty})=\inf_{\mathbf{0}\neq\bX\in\Cst_{\infty}}|\det(\bX)|^2>0$.
\end{proposition}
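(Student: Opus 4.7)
The proof plan is to combine Lemma~\ref{lem:IFuncoded} with Theorem~\ref{thm:snrbound} in the same way that Theorem~\ref{thm:constgap} combines the achievable-rate expression for nested lattice codes with the effective-SNR bound, only here uncoded transmission replaces the coded setup. The structural setup is identical: precoded IF is realized by passing to the aggregate channel $\bar{\bH}=\mathcal{H}\bP$ of dimensions $NM\times M^2$ and then running the ordinary IF receiver with uncoded QAM on this aggregate channel.

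First I would apply Lemma~\ref{lem:IFuncoded} to the aggregate channel. The aggregate channel has $M^2$ (complex) transmit dimensions rather than $M$, so the lemma yields
\begin{align}
P_{e,\text{P-IF-uncoded}} \leq 4M^2\exp\left\{-\frac{3}{2}\,2^{\frac{1}{M^2}\left(M^2\log(\Tsnr_{\text{eff}})\,-\,R_{\text{IF,aggregate}}\right)}\right\},\nonumber
\end{align}
where $\Tsnr_{\text{eff}}$ is the effective SNR for IF over $\bar{\bH}$ and $R_{\text{IF,aggregate}}$ is the rate on the aggregate channel. The prefactor becomes $4M^2$ because there are now $2M^2$ real-valued sub-channels, each contributing a slicer error event via a union bound.

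Next I would substitute the effective-SNR lower bound from Theorem~\ref{thm:snrbound}, namely $\Tsnr_{\text{eff}}>\frac{1}{8M^6}\delta_{\text{min}}(\Cst_\infty)^{1/M}2^{\Cwi/M}$, into the exponent. Multiplying the log by $M^2$ produces $M\Cwi + M\log\delta_{\text{min}}(\Cst_\infty) - M^2\log(8M^6)$. Combined with $R_{\text{IF,aggregate}}=M\cdot R_{\text{P-IF}}$ (one aggregate channel use corresponds to $M$ uses of the original channel, exactly as in the proof of Theorem~\ref{thm:constgap}), the exponent of the $2^{(\cdot)}$ term simplifies to
\begin{align}
\frac{1}{M^2}\Bigl(M\Cwi - M\log\tfrac{1}{\delta_{\text{min}}(\Cst_\infty)} - M^2\log(8M^6) - M R_{\text{P-IF}}\Bigr)
= \frac{1}{M}\bigl(\Cwi - R_{\text{P-IF}} - \log\tfrac{1}{\delta_{\text{min}}(\Cst_\infty)} - M\log(8M^6)\bigr).\nonumber
\end{align}

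The final step is arithmetic bookkeeping: since $8M^6=(2M^2)^3$, one has $M\log(8M^6)=3M\log(2M^2)$, so the bracket collapses to exactly $\Cwi-R_{\text{P-IF}}-\Gamma(\delta_{\text{min}}(\Cst_\infty),M)$ by the definition~\eqref{Gammadef}. Substituting back yields precisely the claimed bound. I do not anticipate any genuine obstacle here; the only thing to watch carefully is to use the correct dimensionality of the aggregate channel ($M^2$ rather than $M$) in both the prefactor and the normalization inside the exponent of Lemma~\ref{lem:IFuncoded}, and the rate rescaling between $R_{\text{IF,aggregate}}$ and $R_{\text{P-IF}}$.
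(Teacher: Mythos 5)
Your proposal is correct and matches the paper's own proof exactly: both apply Lemma~\ref{lem:IFuncoded} to the $NM\times M^2$ aggregate channel (replacing $M$ by $M^2$ so the prefactor becomes $4M^2$ and the normalization becomes $1/M^2$), substitute $R_{\text{IF,aggregate}}=M R_{\text{P-IF}}$, and then plug in the effective-SNR bound from Theorem~\ref{thm:snrbound}, with the identity $8M^6=(2M^2)^3$ collapsing everything to $\Gamma(\delta_{\text{min}}(\Cst_\infty),M)$. Your bookkeeping is carried out more explicitly than in the paper, but the route is identical.
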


\vspace{1mm}

\begin{proof}
Using~\eqref{peuncoded}, the error probability of uncoded IF equalization over the aggregate channel~\eqref{uncodedST1} is bounded by
\begin{align}
P_{e,\text{P-IF-uncoded}}\leq 4M^2\exp\left\{-\frac{3}{2}2^{\frac{1}{M^2}\left(M^2\log(\Tsnr_{\text{eff}})-MR_{\text{P-IF}}\right)} \right\},\nonumber
\end{align}
where we have used the fact that the transmission rate over the aggregate channel is $M$ times larger than the actual communication rate $R_{\text{P-IF}}$. Now, replacing $\Tsnr_{\text{eff}}$ with its bound from Theorem~\ref{thm:snrbound} establishes the proposition.
\end{proof}

\section{Application: Rateless coding for MIMO channels via precoded integer-forcing}
\label{sec:rateless}

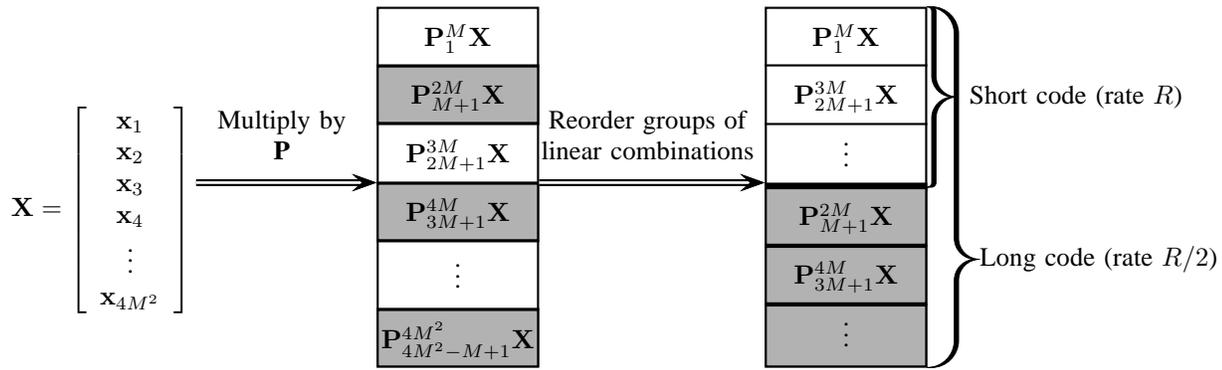
\begin{figure*}[]
\begin{center}
\psset{unit=0.6mm}
\begin{pspicture*}(0,10)(300,90)

\rput(0,0)
{
\rput(3,0){\rput(22,45){$\bX=\left[
                                                   \begin{array}{c}
                                                     \bx_1 \\
                                                     \bx_2 \\
                                                     \bx_3 \\
                                                     \bx_4 \\
                                                     \vdots \\
                                                     \bx_{4M^2} \\
                                                   \end{array}
                                                 \right]
$ }
}
\rput(45,1){
\psline[doubleline=true]{->}(0,50)(40,50)
\rput(19,60){$\begin{array}{c}
                                             $\text{Multiply by}$ \\
                                             $\textbf{P}$
                                           \end{array}$}
}
\rput(85,10){
\psframe(0,0)(36,80)
\rput(0,67){\psline(0,0)(36,0)\rput(18,6){$\bP_1^M\bX$}}
\rput(0,54){\psline(0,0)(36,0)\psframe[fillcolor=lightgrey,fillstyle=solid](0,0)(36,13)\rput(18,6){$\bP_{M+1}^{2M}\bX$}}
\rput(0,41){\psline(0,0)(36,0)\rput(18,6){$\bP_{2M+1}^{3M}\bX$}}
\rput(0,28){\psline(0,0)(36,0)\psframe[fillcolor=lightgrey,fillstyle=solid](0,0)(36,13)\rput(18,6){$\bP_{3M+1}^{4M}\bX$}}
\rput(0,13){\psline(0,0)(36,0)\rput(18,9){$\vdots$}}
\psframe[fillcolor=lightgrey,fillstyle=solid](0,0)(36,13)\rput(18,6){$\bP_{4M^2-M+1}^{4M^2}\bX$}}

\rput(121,1){
\psline[doubleline=true]{->}(0,50)(50,50)
\rput(24,60){$\begin{array}{c}
                                             $\text{Reorder groups of}$ \\
                                             $\text{linear combinations}$
                                           \end{array}$}
}
\rput(171,10){
\psframe(0,0)(36,80)
\rput(0,67){\psline(0,0)(36,0)\rput(18,6){$\bP_1^M\bX$}}
\rput(0,54){\psline(0,0)(36,0)\rput(18,6){$\bP_{2M+1}^{3M}\bX$}}
\rput(0,40){\psline[linewidth=2](0,0)(36,0)\rput(18,9){$\vdots$}}
\rput(0,27){\psline(0,0)(36,0)\psframe[fillcolor=lightgrey,fillstyle=solid](0,0)(36,13)\rput(18,6){$\bP_{M+1}^{2M}\bX$}}
\rput(0,14){\psline(0,0)(36,0)\psframe[fillcolor=lightgrey,fillstyle=solid](0,0)(36,13)\rput(18,6){$\bP_{3M+1}^{4M}\bX$}}
\psframe[fillcolor=lightgrey,fillstyle=solid](0,0)(36,14)\rput(18,9){$\vdots$}
\psbrace[ref=lt,braceWidthInner=1](36,40)(36,80){}\rput(69,60){Short code (rate $R$)}
\psbrace[ref=lt,bracePos=0.3,braceWidthInner=6](36,0)(36,80){}\rput(74,24){Long code (rate $R/2$)}
}
}

\end{pspicture*}
\end{center}
\caption{An illustration of the proposed rateless code construction. $P_\ell^k$ denotes the matrix obtained by taking the $\ell$th up to $k$th rows of the matrix $\bP$.}
\label{fig:ratelessconst}
\end{figure*} 

A notable feature of precoded IF is that the scheme, as well as its performance guarantees, do not depend on the number of antennas at the receiver side. In this section, we exploit this property for developing efficient rateless codes for the MIMO channel. The rateless coding problem is another instance of a DoF-mismatch scenario, where the transmitter has to simultaneously transmit to different (virtual) users, each with a different number of receive antennas.

In an open-loop scenario, in addition to not knowing the channel gains, the transmitter may also not know the capacity of its link to the receiver. A reasonable approach, in this case, is to transmit a long codeword describing the information bits, such that if the channel is ``good'', the receiver can stop listening after a short while, whereas if it is ``bad'' a longer fraction of the codeword is needed to ensure correct decoding. Since the code's rate is not predefined, and depends on the channel condition, such an approach is referred to as \emph{rateless coding}.

A rateless code is defined as a family of codes that has the property that codewords of the higher rate codes are prefixes of those of the lower rate ones. A family of such codes is called \emph{perfect} (not to be confused with perfect linear dispersion ST codes) if each of the codes in the family is capacity-achieving.

In this section, we show how precoded IF can be used for constructing a rateless code for the MIMO channel which is a constant number of bits from perfect, i.e., each of its subcodes achieves the compound MIMO capacity to within a constant number of bits. For sake of brevity, we only illustrate the scheme through an example rather than give a full description.

Assume the channel model is the one from~\eqref{channelmodel}, and the goal is to design two codes with rates $R$, and $R/2$, where the higher rate code is a prefix of the lower rate one. It is further required that for some predefined $\delta>0$ if the channel's capacity $C$ satisfies $C>R+\delta$ the high-rate (short) code can be decoded reliably, and if $C>R/2+\delta$ the low-rate (long) code can be decoded reliably.
This problem can be viewed as that of designing a code which is simultaneously good for the two channel matrices
\begin{align}
\bH_1=\left[
        \begin{array}{cc}
          \bH & \mathbf{0} \\
          \mathbf{0} & \mathbf{0} \\
        \end{array}
      \right] \ \ \ \text{and} \ \  \ \bH_2=\left[
        \begin{array}{cc}
          \bH & \mathbf{0} \\
          \mathbf{0} & \bH \\
        \end{array}
      \right],\nonumber
\end{align}
since the effective channel $\bH_2$ is obtained from twice as many channel uses as $\bH_1$, which corresponds to a code twice as long.
If $\bH\in\CC^{N\times M}$, then $\bH_1,\bH_2\in\CC^{2N\times 2M}$. In the previous section, it was shown that precoded IF can simultaneously achieve the capacity of any MIMO channel to within a constant gap. In particular, it can simultaneously achieve the capacity of $\bH_1$ and $\bH_2$ to within a constant gap.

The rateless code is therefore constructed from $4M^2$ complex streams of linear codewords (each consisting of one linear codeword in its quadrature component and one in its in-phase components). Each complex stream is of length $n$ and carries $nR/2M$ bits. These streams are then precoded using the matrix $\bP\in\CC^{4M^2\times 4M^2}$ which generates a perfect $2M\times 2M$ linear dispersion ST code. This results in a set of $4M^2$ linear combinations of the coded streams. The linear combinations are then split into $4M$ groups each containing $M$ linear combinations, such that the first group consists of the first $M$ linear combinations, the next group contains the next $M$ linear combinations, and so on. The short code consists of the odd groups of linear combinations, whereas the long code consists of both odd and even groups of linear combinations. See Figure~\ref{fig:ratelessconst} for an illustration of the code construction.

The long code is transmitted during $4Mn$ consecutive channel uses. At the receiver side, integer-forcing equalization is applied. The receiver, which knows the channel capacity, can decide whether the first $2M^2$ linear combinations, corresponding to the first $2Mn$ channel uses, suffice for correct decoding of the $4M^2$ coded streams, or all $4M^2$ linear combinations, corresponding to all $4Mn$ channel uses, are needed. Theorem~\ref{thm:constgap} implies that if the capacity is greater than \mbox{$R+\Gamma\left(\delta_{\text{min}}(\Cst_{\infty}),2M\right)$} the short code can be decoded reliably, and if it is greater than \mbox{$R/2+\Gamma\left(\delta_{\text{min}}(\Cst_{\infty}),2M\right)$} the long code can be decoded reliably.

Note that although we have only described the construction of a code that is compatible with two different rates, the aforementioned construction can be easily extended to any number of rates.

\section{Discussion and Summary}
\label{sec:conclusions}
The additive Gaussian noise MIMO channel in an open-loop scenario, where the receiver has complete channel state information whereas the transmitter only knows the white-input mutual information was considered in this paper. It was shown that using linear precoding at the transmitter in conjunction with integer-forcing equalization at the receiver suffices to approach the capacity of this compound channel to within a constant gap, depending only on the number of transmit antennas.
To the best of our knowledge, this is the first practical scheme that guarantees an additive loss w.r.t. the compound capacity. Such a performance guarantee is much stronger than DMT optimality, which is at present the common benchmark for evaluating schemes. In particular, although our results are free from any statistical assumptions, they can be interpreted to obtain performance guarantees in a MIMO fading environment. Specifically, a scheme that achieves a constant gap from capacity is DMT optimal under any fading statistics, and achieves a constant gap from the outage capacity under any fading statistics.

IF equalization uses coded streams, and is therefore usually less suitable for fast fading environments. Nevertheless, we have also developed new upper bounds on an uncoded version of IF equalization, which is more adequate for fast fading. We note that while uncoded IF equalization is quite similar to lattice reduction aided decoding, to the best of our knowledge, the performance of the latter was never analyzed at such a fine scale.

Another appealing feature of the described scheme, inherited from the properties of its underlying perfect ST codes, is that it is independent of the number of receive antennas, and the performance guarantees obtained in this paper do not depend on the number of receive antennas as well. Hence, the scheme is not sensitive to a degrees-of-freedom mismatch.

The compound channel studied in this paper includes all channel matrices with the same white-input mutual information. In certain scenarios, such as multicasting the same message to a finite set of users whose channel matrices are known at the transmitter, it makes sense to consider compound channels with a relatively small number of users. Recent work~\cite{de14} demonstrates that precoded IF-SIC performs remarkably well in such scenarios and achieves a large fraction of the compound capacity, even at small SNRs, under reasonable statistical assumptions on the distribution of the channel matrices.

\section*{Acknowledgment}
Helpful discussions with Yair Yona, Elad Domanovitz, Barak Weiss and Bobak Nazer are greatly acknowledged.
\begin{appendices}

\section{Proof of Lemma~\ref{lem:IFuncoded}}
\label{app:uncoded}

The output of the $k$th sub-channel with uncoded transmission is
\begin{align}
\tilde{y}_k&=[v_k+z_{\text{eff},k}]\mod \gamma q\ZZ\nonumber,
\end{align}
where $v_k\in\gamma\ZZ$. The estimate $\hat{v}_k$ is generated by applying a simple slicer (nearest-neighbor quantizer w.r.t. $\gamma\ZZ$) to $\tilde{y}_k$, followed by \hspace{-1.5mm}$\mod \gamma q\ZZ$ reduction.
The detection error probability at the $k$th sub-channel is upper bounded by
\begin{align}
P_{e,k}&\triangleq\Pr\left(\hat{v}_k\neq v_k\right)\nonumber\\
&\leq\Pr\left(\left|z_{\text{eff},k}\right|\geq\frac{\gamma}{2}\right).\nonumber
\end{align}
In order to bound $P_{e,k}$, a simple lemma, which is based on~\cite[Theorem 7]{fsk11} is needed.

\vspace{2mm}

\begin{lemma}
\label{lem:mixture}
Consider the random variable
\begin{align}
z_{\text{eff}}=\sum_{\ell=1}^{L} \alpha_\ell z_\ell +\sum_{k=1}^K \beta_k d_k\nonumber
\end{align}
where $\left\{z_\ell\right\}_{\ell=1}^L$ are i.i.d. Gaussian random variables with zero mean and some variance $\sigma^2_z$ and $\left\{d_k\right\}_{k=1}^K$ are i.i.d. random variables, statistically independent of $\left\{z_\ell\right\}_{\ell=1}^L$, uniformly distributed over the interval $[-\rho/2,\rho/2)$ for some $\rho>0$. Let $\sigma^2_{\text{eff}}\triangleq\mathbb{E}(z^2_{\text{eff}})$. Then
\begin{align}
\Pr(z_{\text{eff}}>\tau)=\Pr(z_{\text{eff}}<-\tau)\leq\exp\left\{-\frac{\tau^2}{2\sigma^2_{\text{eff}}}\right\}.\nonumber
\end{align}
\end{lemma}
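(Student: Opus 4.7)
The plan is to prove the tail bound via a standard Chernoff/MGF argument, showing that $z_{\text{eff}}$ is sub-Gaussian with variance proxy equal to its actual variance $\sigma_{\text{eff}}^2$. The symmetry $\Pr(z_{\text{eff}}>\tau)=\Pr(z_{\text{eff}}<-\tau)$ is immediate from the fact that every summand in $z_{\text{eff}}$ has a symmetric distribution, so $-z_{\text{eff}}$ has the same law as $z_{\text{eff}}$. It therefore suffices to handle the upper tail.

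The first step is Markov's inequality in its exponential form: for any $s>0$,
\begin{equation}
\Pr(z_{\text{eff}}>\tau)\leq e^{-s\tau}\,\mathbb{E}\bigl[e^{s z_{\text{eff}}}\bigr].\nonumber
\end{equation}
Using the independence of all the $z_\ell$'s and $d_k$'s, the MGF factorizes as
\begin{equation}
\mathbb{E}\bigl[e^{s z_{\text{eff}}}\bigr]=\prod_{\ell=1}^{L}\mathbb{E}\bigl[e^{s\alpha_\ell z_\ell}\bigr]\cdot\prod_{k=1}^{K}\mathbb{E}\bigl[e^{s\beta_k d_k}\bigr].\nonumber
\end{equation}
The Gaussian factors are exact: $\mathbb{E}[e^{s\alpha_\ell z_\ell}]=\exp(s^2\alpha_\ell^2\sigma_z^2/2)$. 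The core technical step, and the main obstacle, is to establish the same type of sub-Gaussian bound for the uniform factors, namely that for $d$ uniform on $[-\rho/2,\rho/2)$ with variance $\sigma_d^2=\rho^2/12$,
\begin{equation}
\mathbb{E}\bigl[e^{s\beta_k d}\bigr]\leq\exp\!\left(\tfrac{1}{2}s^2\beta_k^2\sigma_d^2\right).\nonumber
\end{equation}

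To prove this inequality I would compute the MGF explicitly, $\mathbb{E}[e^{s\beta_k d}]=\sinh(s\beta_k\rho/2)/(s\beta_k\rho/2)$, and then invoke the elementary scalar inequality $\sinh(x)/x\leq \exp(x^2/6)$, which can be verified term by term from the Taylor expansions $\sinh(x)/x=\sum_{n\geq 0}x^{2n}/(2n+1)!$ and $\exp(x^2/6)=\sum_{n\geq 0}x^{2n}/(6^n n!)$, since $(2n+1)!\geq 6^n n!$ for all $n\geq 0$. Setting $x=s\beta_k\rho/2$ gives the bound with exponent $s^2\beta_k^2\rho^2/24=s^2\beta_k^2\sigma_d^2/2$, as needed. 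This is precisely the sub-Gaussian-with-variance-proxy-equal-to-variance property of the uniform distribution, and is presumably what \cite[Theorem~7]{fsk11} provides in the form used by the paper.

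Combining the Gaussian and uniform bounds and recognizing that $\sigma_{\text{eff}}^2=\sum_\ell\alpha_\ell^2\sigma_z^2+\sum_k\beta_k^2\sigma_d^2$, we obtain
\begin{equation}
\mathbb{E}\bigl[e^{s z_{\text{eff}}}\bigr]\leq\exp\!\left(\tfrac{1}{2}s^2\sigma_{\text{eff}}^2\right),\nonumber
\end{equation}
so $\Pr(z_{\text{eff}}>\tau)\leq\exp(s^2\sigma_{\text{eff}}^2/2-s\tau)$. Optimizing over $s>0$ by choosing $s=\tau/\sigma_{\text{eff}}^2$ yields the claimed bound $\exp(-\tau^2/(2\sigma_{\text{eff}}^2))$, and invoking symmetry finishes the proof. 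Routine verification of the scalar inequality $\sinh(x)/x\leq \exp(x^2/6)$ is the only nontrivial calculation; all other steps are standard manipulations of MGFs.
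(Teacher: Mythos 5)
Your proof follows the paper's approach exactly: exploit the symmetry of $z_{\text{eff}}$, apply the Chernoff/MGF bound, factorize by independence, use the exact Gaussian MGF and the sub-Gaussian bound $\sinh(x)/x\leq e^{x^2/6}$ for the uniform factors, and then optimize $s=\tau/\sigma_{\text{eff}}^2$. The only difference is that you spell out the term-by-term Taylor comparison $(2n+1)!\geq 6^n\,n!$ to justify the scalar inequality, whereas the paper simply asserts it follows from a simple Taylor expansion (citing \cite[Theorem~7]{fsk11}).
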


\vspace{2mm}

\begin{proof}
The probability density function of $z_{\text{eff}}$ is symmetric around zero and hence
\begin{align}
\Pr(z_{\text{eff}}\geq\tau)=\Pr(z_{\text{eff}}\leq-\tau).\nonumber
\end{align}
Applying Chernoff's bound gives (for $s>0$)
\begin{align}
\Pr(z_{\text{eff}}\geq\tau)&\leq e^{-s\tau}\mathbb{E}\left(e^{s z_{\text{eff}}}\right)\nonumber\\
&=e^{-s\tau}\mathbb{E}\left(e^{s \left(\sum_{\ell=1}^L\alpha_\ell z_l+\sum_{k=1}^K\beta_k d_k\right)}\right)\nonumber\\
&=e^{-s\tau}\prod_{\ell=1}^L\mathbb{E}\left(e^{s\alpha_\ell z_l}\right)\prod_{k=1}^K\mathbb{E}\left(e^{s\beta_k d_k}\right).\nonumber
\end{align}
Using the well-known expressions for the moment generating functions of Gaussian and uniform random variables gives
\begin{align}
\mathbb{E}\left(e^{s\alpha_\ell z_l}\right)&=e^{\frac{1}{2}s^2\alpha^2_\ell\sigma^2_z},\nonumber\\
\mathbb{E}\left(e^{s\beta_k d_k}\right)&=\frac{\sinh(s\beta_k\rho/2)}{s\beta_k\rho/2}\leq e^{\frac{1}{2}\frac{s^2\beta_k^2\rho^2}{12}},\nonumber
\end{align}
where the last inequality follows from \mbox{$\sinh(x)/x\leq \exp\{x^2/6\}$} (which can be obtained by simple Taylor expansion)~\cite{fsk11}. It follows that
\begin{align}
\Pr(z_{\text{eff}}\geq\tau)&\leq e^{-s\tau}e^{\frac{s^2}{2}\left(\sum_{\ell=1}^L \alpha^2_\ell\sigma^2_z+\sum_{k=1}^K\beta^2_k\frac{\rho^2}{12} \right)}\nonumber\\
&=e^{-s\tau+\frac{1}{2}s^2\sigma^2_{\text{eff}}}.\label{sbound}
\end{align}
Setting $s=\tau/\sigma^2_{\text{eff}}$ gives the desired result.
\end{proof}

\vspace{2mm}

Now, using Lemma~\ref{lem:mixture}, the probability of detection error at the $k$th sub-channel can be bounded as
\begin{align}
P_{e,k}
&\leq\Pr\left(\left|z_{\text{eff},k}\right|\geq\frac{\gamma}{2} \right)\nonumber\\
&\leq 2\exp\left\{-\frac{\gamma^2}{8\sigma^2_{\text{eff},k}} \right\}\nonumber\\
&= 2\exp\left\{-\frac{12\Tsnr}{8 q^2\sigma^2_{\text{eff},k}} \right\}\nonumber\\
&= 2\exp\left\{-\frac{3}{2}\frac{1}{q^2}\Tsnr_{\text{eff},k} \right\},\nonumber
\end{align}
where the definition of $\Tsnr_{\text{eff},k}$ was used in the last equality. Using the fact that $q=2^R$ and that \mbox{$\Tsnr_{\text{eff},k}\geq\Tsnr_{\text{eff}}$} for all $k=1,\ldots,2M$, the detection error probability at each of the $2M$ sub-channels can be further bounded as
\begin{align}
P_e\leq 2\exp\left\{-\frac{3}{2}2^{2\left(\frac{1}{2}\log(\Tsnr_{\text{eff}})-R\right)} \right\}.\nonumber
\end{align}
Since the IF equalizer makes an error only if a detection error occurred in at least one of the $2M$ sub-channels, and since the total transmission rate is \mbox{$R_{\text{IF}}=2MR$}, the total error probability of the IF equalizer with uncoded transmission rate $R_{\text{IF}}$ is bounded by
\begin{align}
P_{e,\text{IF-uncoded}}&\leq 4M\exp\left\{-\frac{3}{2}2^{2\left(\frac{1}{2}\log(\Tsnr_{\text{eff}})-\frac{R_{\text{IF}}}{2M}\right)} \right\}\nonumber\\
&=4M\exp\left\{-\frac{3}{2}2^{\frac{1}{M}\left(M\log(\Tsnr_{\text{eff}})-R_{\text{IF}}\right)} \right\}.\nonumber
\end{align}

\section{Proof of Lemma~\ref{lem:IF_DoF}}
\label{app:DoFlemma}
Let $\psi:\RR^+\rightarrow\RR^+$ be a real positive decreasing function with $\psi(r)\rightarrow 0$ as $r\rightarrow\infty$, let $\mathbb{I}^{N\times M}\triangleq [-1/2,1/2)^{N\times M}$ be the set of all matrices of dimensions $N\times M$ with all entries taken from the interval $[-1/2,1/2)$, and define the set
\begin{align}
W_0(M,N,\psi)\triangleq\bigg\{\bH\in\mathbb{I}^{N\times M} \ &: \ ||\bH\ba\|_{\infty}\leq \psi(\|\ba\|_{\infty}) \nonumber\\
&\ \text{for i.m. } \ba\in\ZZ^M\setminus\mathbf{0} \bigg\},
\end{align}
where $\|\bx\|_{\infty}\triangleq\max_{i}|x_i|$ is the infinity norm, and i.m. means infinitely many.
The next result from~\cite[Corollary 2]{hussain2009metrical} shows that $W_0(M,N,\psi)$ has either zero Lebesgue measure or full Lebesgue measure, depending on the choice of the function $\psi$.

\vspace{2mm}

\begin{theorem}{\cite[Corollary 2]{hussain2009metrical}}
Let $\psi:\RR^+\rightarrow\RR^+$ be a real positive decreasing function with $\psi(r)\rightarrow 0$ as $r\rightarrow\infty$. For $M>N$, if the series
\begin{align}
\sum_{r=1}^{\infty}\psi^N(r)r^{M-N-1}\nonumber
\end{align}
converges then the set $W_0(M,N,\psi)$ has zero Lebesgue measure, and if it diverges the set $W_0(M,N,\psi)$ has full Lebesgue measure.
\label{Thm:mumtaz}
\end{theorem}

\vspace{2mm}

For the choice $\psi(r)=r^{-(\frac{M+\epsilon}{N}-1)}$, $\epsilon>0$, the sum from Theorem~\ref{Thm:mumtaz} converges. This, implies that for $M>N$ the set
\begin{align}
\tilde{W}_0(M,N)\triangleq\bigg\{\bH\in\mathbb{I}^{N\times M} \ &: \ ||\bH\ba\|_{\infty}\leq \|\ba\|_{\infty}^{-(\frac{M+\epsilon}{N}-1)} \nonumber\\
&\ \text{for i.m. } \ba\in\ZZ^M\setminus\mathbf{0} \bigg\}
\end{align}
has zero measure.
Define the set
\begin{align}
\mathcal{S}_0(M,N)\triangleq\bigg\{\bH\in\mathbb{I}^{N\times M} \ &: \ \|\bH\ba\|^2\leq \|\ba\|_{\infty}^{-2(\frac{M+\epsilon}{N}-1)} \nonumber\\
&\ \text{for i.m. } \ba\in\ZZ^M\setminus\mathbf{0} \bigg\}\nonumber,
\end{align}
and note that $\mathcal{S}_0(M,N)\subseteq \tilde{W}_0(M,N)$, as $\|\bH\ba\|_{\infty}^2\leq\|\bH\ba\|^2$.
The next Corollary is straightforward.

\vspace{2mm}

\begin{corollary}
For $M>N$ and any $\epsilon>0$, the set
\begin{align}
\mathcal{H}_0(M,N)
\triangleq\bigg\{\bH\in\mathbb{I}^{N\times M} \ &: \ \tilde{d}^2_{\text{min}}(\bH,L)\leq  L^{-2(\frac{M+\epsilon}{N}-1)} \nonumber\\
&\text{for i.m. } L\in\mathbb{N} \bigg\}.\nonumber
\end{align}
has zero Lebesgue measure.
\end{corollary}

\vspace{2mm}

\begin{proof}
By the definition of $\tilde{d}_{\text{min}}(\bH,L)$, the sets $\mathcal{S}_0(M,N)$ and $\mathcal{H}_0(M,N)$ are equal. The corollary then follows from the fact that $\tilde{W}_0(M,N)$ has zero measure and that $\mathcal{S}_0(M,N)\subseteq \tilde{W}_0(M,N)$.
\end{proof}

\vspace{2mm}

Let $\bar{\mathcal{H}}_0(M,N)=\mathbb{I}^{N\times M}\setminus\mathcal{H}_0(M,N)$ be the complement set of $\mathcal{H}_0(M,N)$, and note that $\bar{\mathcal{H}}_0(M,N)$ has full Lebesgue measure. For any $\bH\in\bar{\mathcal{H}}_0(M,N)$ there exist a positive integer $L^{*}(\bH)$ such that the inequality
\begin{align}
\tilde{d}^2_{\text{min}}(\bH,L)>  L^{-2(\frac{M+\epsilon}{N}-1)}
\end{align}
holds for any integer $L>L^{*}(\bH)$. It follows that
\begin{align}
\min_{L>L^*(\bH)} & \left(L^2+\Tsnr\tilde{d}^2_{\text{min}}(\bH,L)\right) \nonumber\\
&>\min_{L>L^*(\bH)} \left(L^2+ \Tsnr L^{-2(\frac{M+\epsilon}{N}-1)}\right)\nonumber\\
&>\min_{L>L^*(\bH)} \max\left(L^2, \Tsnr L^{-2(\frac{M+\epsilon}{N}-1)}\right)\nonumber\\
&>\min_{L>0} \max\left(L^2, \Tsnr L^{-2(\frac{M+\epsilon}{N}-1)}\right).\label{minmax}
\end{align}
Since $L^2$ is increasing in $L$ and $ \Tsnr L^{-2(\frac{M+\epsilon}{N}-1)}$ is decreasing in $L$, the minimum in~\eqref{minmax} is attained when $L^2= \Tsnr L^{-2(\frac{M+\epsilon}{N}-1)}$, which occurs for
\begin{align}
L^2=\Tsnr^{\frac{N}{M+\epsilon}}.\nonumber
\end{align}
This implies that
\begin{align}
\min_{L>L^*(\bH)} & L^2+\Tsnr\tilde{d}^2_{\text{min}}(\bH,L)>\Tsnr^{\frac{N}{M+\epsilon}}.\nonumber
\end{align}
On the other hand, for any $\bH\in\bar{\mathcal{H}}_0(M,N)$ we can find a constant $c(\bH)>0$ such that
\begin{align}
\min_{L\leq L^*(\bH)}\tilde{d}_{\text{min}}(\bH,L)>c(\bH).\nonumber
\end{align}
This follows from the fact that if there exist an integer vector $\ba\in\ZZ^M\setminus \mathbf{0}$ for which $\|\bH\ba\|^2=0$, then there are infinitely many such vectors, which contradicts the fact that $\bH\in\bar{\mathcal{H}}_0(M,N)$. Thus, for any $\bH\in\bar{\mathcal{H}}_0(M,N)$ we have
\begin{align}
\Tsnr_{\text{eff}}&>\frac{1}{M^2}\min_{L=1,2,\cdots} \left(L^2+\Tsnr\tilde{d}^2_{\text{min}}(\bH,L)\right)\nonumber\\
&= \frac{1}{M^2}\min\bigg(\min_{L\leq L^*(\bH)}\left(L^2+\tilde{d}^2_{\text{min}}(\bH,L)\right),\nonumber\\
& \ \ \ \ \ \ \ \  \ \ \ \ \  \  \ \ \ \min_{L> L^*(\bH)}\left(L^2+\tilde{d}^2_{\text{min}}(\bH,L)\right)\bigg)\nonumber\\
&>\frac{1}{M^2}\min\left(c(\bH)\Tsnr,\Tsnr^{\frac{N}{M+\epsilon}}\right).
\label{compBound}
\end{align}
Taking the limit of $\Tsnr\rightarrow\infty$ we see that
\begin{align}
\lim_{\Tsnr\rightarrow\infty}\frac{\nicefrac{1}{2}\log(\Tsnr_{\text{eff}})}{\nicefrac{1}{2}\log(\Tsnr)}\geq\frac{N}{M+\epsilon},\label{DoFMN}
\end{align}
for any $\bH\in\bar{\mathcal{H}}_0(M,N)$ and $M>N$. Now, taking $\epsilon\rightarrow 0$ we see that for any $\bH\in\bar{\mathcal{H}}_0(M,N)$ and $M>N$ the IF scheme achieves $N$ degrees of freedom. Since $\bH\in\bar{\mathcal{H}}_0(M,N)$ has full Lebesgue measure, the IF scheme achieves $N$ degrees of freedom for almost every $\bH\in\mathbb{I}^{N\times M}$. To see why this is also true for almost every $\bH\in\RR^{N\times M}$, note that if $\bH\notin\mathbb{I}^{N\times M}$, then we can scale it by a scalar $\rho\leq 1$ such that $\rho\bH\in\mathbb{I}^{N\times M}$. But since $\tilde{d}_{\text{min}}(\bH,L)\geq \tilde{d}_{\text{min}}(\rho\bH,L)$, this will only decrease $\Tsnr_{\text{eff}}$. Thus, we conclude that the IF scheme achieves $N$ degrees of freedom for almost every $\bH\in\RR^{N\times M}$, which establishes the lemma for $M>N$.

The case $N\geq M$ is much easier. For any matrix $\bH\in\RR^{N\times M}$ we denote the smallest singular value by $\sigma_M(\bH)$. Standard linear algebra gives
\begin{align}
\|\bH\ba\|^2\geq\sigma_M^2(\bH)\|\ba\|^2.\nonumber
\end{align}
Since $\|\ba\|\geq 1$ for all $\ba\in\text{PAM}^M(L)\setminus\mathbf{0}$, we have
\begin{align}
\Tsnr_{\text{eff}}&>\frac{1}{M^2}\min_{L=1,2,\cdots} \left(L^2+\Tsnr\tilde{d}^2_{\text{min}}(\bH,L)\right)\nonumber\\
&>\frac{\sigma^2_M(\bH)}{M^2}\Tsnr.\label{mineigbound}
\end{align}
For $N\geq M$, the set of matrices $\bH\in\RR^{N\times M}$ for which $\sigma^2_M(\bH)>0$ has full Lebesgue measure. Applying~\eqref{mineigbound} gives
\begin{align}
\lim_{\Tsnr\rightarrow\infty}\frac{\nicefrac{1}{2}\log(\Tsnr_{\text{eff}})}{\nicefrac{1}{2}\log(\Tsnr)}\geq 1,\label{DOFNM}
\end{align}
for almost every $\bH\in\RR^{N\times M}$ when $N\geq M$. Combining~\eqref{DoFMN} and~\eqref{DOFNM}, we get that
\begin{align}
\lim_{\Tsnr\rightarrow\infty}\frac{R_{\text{IF}}(\Tsnr)}{\nicefrac{1}{2}\log(\Tsnr)}\geq\min(M,N).\label{DoFLB}
\end{align}
It is well-known (see e.g.,~\cite{tseviswanath}) that for all $\bH\in\RR^{N\times M}$, the number of DoF offered by the channel is not greater $\min(M,N)$, regardless of the coding scheme which is used. Combining this with~\eqref{DoFLB} gives
\begin{align}
\lim_{\Tsnr\rightarrow\infty}\frac{R_{\text{IF}}(\Tsnr)}{\nicefrac{1}{2}\log(\Tsnr)}=\min(M,N).\label{DoFLB}
\end{align}
for almost every $\bH\in\RR^{N\times M}$, as desired.

\section{Proof of Theorem~\ref{thm:normbound}}
\label{app:proofnormbound}

Consider some arbitrary $\mathbf{0}\neq\bX\in\Cst$ and let
\begin{align}
\bH=\bU_1\mathbf{\Psi}\bV_1^{\dagger} \ \text{and} \ \bX=\bU_2\bm{\Lambda}\bV_2^{\dagger}\nonumber
\end{align}
be the singular value decompositions (SVD) of $\bH$ and $\bX$, respectively. With this notation
\begin{align}
\Tsnr\|\bH\bX\|_F^2=\Tsnr\|\mathbf{\Psi}\bV_1^{\dagger}\bU_2\bm{\Lambda}\|_F^2.\label{normtmp}
\end{align}
Suppose without loss of generality that the (absolute) singular values are ordered by increasing value in $\bm{\Lambda}$ and by decreasing value in $\mathbf{\Psi}$:
\begin{align}
\bm{\Lambda}&=\diag\{\lambda_1,\ldots,\lambda_{M}\},\nonumber\\
\mathbf{\Psi}&=\diag\{\psi_1,\ldots,\psi_{m_n},0,\cdots,0\},\nonumber
\end{align}
where $m_n\triangleq\min\{M,N\}$.
In order to establish the desired result one has to find the channel $\bH$ with corresponding WI mutual information $\Cwi$ that minimizes~\eqref{normtmp}. The rotation matrix $\bV_1$ that minimizes~\eqref{normtmp} is $\bV_1=\bU_2$ which aligns the weaker singular values of the channel matrix with the stronger singular values of the code matrix~\cite{kw03}. Thus, the problem of finding the worst channel matrix $\bH$ w.r.t. the codeword $\bX$ reduces to the optimization problem
\begin{align}
&\min_{\psi_1,\ldots,\psi_{m_n}}\Tsnr\sum_{m=1}^{m_n}|\psi_m|^2|\lambda_m|^2\nonumber\\
&\text{subject to} \sum_{m=1}^{m_n}\log(1+|\psi_m|^2\Tsnr)=\Cwi.\label{nmopt}
\end{align}
A lower bound on the solution of the minimization problem~\eqref{nmopt} can be obtained by replacing $m_n$ with $M\geq m_n$, which increases (or does not change) the optimization space and results in
\begin{align}
&\min_{\psi_1,\ldots,\psi_{M}}\Tsnr\sum_{m=1}^{M}|\psi_m|^2|\lambda_m|^2\nonumber\\
&\text{subject to} \sum_{m=1}^{M}\log(1+|\psi_m|^2\Tsnr)=\Cwi.\label{optproblem}
\end{align}
The solution to~\eqref{optproblem} is given by standard water-filling~\cite{tv06}
\begin{align}
\Tsnr\|\bH\bX\|_F^2\geq\sum_{m=1}^M\left[\frac{1}{\lambda}-|\lambda_m|^2 \right]^+,\label{normlambda}
\end{align}
where $\lambda$ satisfies
\begin{align}
\sum_{m=1}^M\left[\log\left(\frac{1}{\lambda|\lambda_m|^2}\right) \right]^+=\Cwi.\label{lambdacond}
\end{align}
Without loss of generality we may assume that \mbox{$2M^2L^2\leq \delta_{\text{min}}(\Cst_{\infty})^{\frac{1}{M}}2^{\frac{\Cwi}{M}}$} as otherwise the theorem is trivial. With this assumption, we next show that the $[\cdot]^+$ operation in~\eqref{lambdacond} is not needed, and hence its solution is given by
\begin{align}
\frac{1}{\lambda}=|\lambda_1\cdots\lambda_M|^{\frac{2}{M}}2^{\frac{\Cwi}{M}}.\label{lambdasol}
\end{align}
To see this, one has to show that with $1/\lambda$ as above the inequality $1/\lambda\geq |\lambda_m|^2$ holds for all \mbox{$m=1,\cdots,M$}. First recall that $\bX$ is a codeword from a perfect linear dispersion ST code over an $\text{QAM}(L)$ constellation. Let $\bP$ be the generating matrix of the code $\Cst$.
Thus, $\vect(\bX)=\bP\bs$ for some vector $\bs$ whose $M^2$ components all belong to the $\text{QAM}(L)$ constellation. This implies that
\begin{align}
\sum_{m=1}^M|\lambda_m|^2&=\|\bX\|_F^2\nonumber\\
&=\|\vect(\bX)\|^2\nonumber\\
&=\|\bP\bs\|^2\nonumber\\
&=\|\bs\|^2\label{unitaryP}\\
&\leq 2M^2L^2,\label{lambdaeq}
\end{align}
where~\eqref{unitaryP} follows from the fact that $\bP$ is unitary. In particular,~\eqref{lambdaeq} implies that
\begin{align}
|\lambda_m|^2\leq 2M^2L^2\nonumber
\end{align}
for all $m=1,\ldots,M$.  Since by definition
\begin{align}
|\lambda_1\cdots\lambda_M|^2=|\det(\bX)|^2\geq \delta_{\text{min}}(\Cst_{\infty}),\nonumber
\end{align}
we have for all $m=1,\ldots,M$
\begin{align}
|\lambda_m|^2&\leq 2M^2L^2\nonumber\\
&\leq \delta_{\text{min}}(\Cst_{\infty})^{\frac{1}{M}}2^{\frac{\Cwi}{M}}\nonumber\\
&\leq |\lambda_1\cdots\lambda_M|^{\frac{2}{M}}2^{\frac{\Cwi}{M}}\nonumber\\
&=\frac{1}{\lambda}.\nonumber
\end{align}
Thus,~\eqref{lambdasol} indeed solves~\eqref{lambdacond}.

Substituting~\eqref{lambdasol} into~\eqref{normlambda} gives
\begin{align}
\Tsnr\|\bH\bX\|_F^2&\geq \left[M|\lambda_1\cdots\lambda_M|^{\frac{2}{M}}2^{\frac{\Cwi}{M}}-\sum_{m=1}^M|\lambda_m|^2\right]^+\nonumber\\
&\geq \left[M\delta_{\text{min}}(\Cst_{\infty})^{\frac{1}{M}}2^{\frac{\Cwi}{M}}-2M^2L^2\right]^+\nonumber\\
&\geq \left[\delta_{\text{min}}(\Cst_{\infty})^{\frac{1}{M}}2^{\frac{\Cwi}{M}}-2M^2L^2\right]^+\nonumber
\end{align}
as desired.

\end{appendices}

\bibliographystyle{IEEEtran}
\bibliography{IF_Constant_Gap_bib}

\end{document}